\documentclass[sn-mathphys-num]{sn-jnl}

\usepackage[T1]{fontenc}
\usepackage{graphicx}
\usepackage{amsmath,amssymb,amsfonts,amsthm,mathtools,mathrsfs}
\usepackage[title]{appendix}
\usepackage{booktabs,enumitem,xcolor}

\theoremstyle{thmstyleone}%
\newtheorem{theorem}{Theorem}%
\newtheorem{proposition}[theorem]{Proposition}%
\newtheorem{lemma}[theorem]{Lemma}%
\newtheorem{corollary}[theorem]{Corollary}%
\theoremstyle{thmstyletwo}%
\newtheorem{remark}{Remark}%
\theoremstyle{thmstylethree}%
\theoremstyle{thmstyleone}%
\newcommand{\restatename}{Theorem}%
\newtheorem*{restatement}{\restatename}%
\begin{document}
	
	\title[Fractality of the Schr\"odinger Density]{Fractality of the
		Schr\"odinger Density for\\ Rough Data}
	
	\author*[1]{\fnm{Masoud} \sur{Ataei}}\email{masoud.ataei@utoronto.ca}
	
	\affil[1]{\normalsize\orgdiv{Department of Mathematical and
			Computational Sciences},\\ \orgname{University of Toronto},
		\country{Canada}}
	
	\abstract{The observable of the Talbot effect is the intensity
		behind the grating: the density of the evolving field, not the
		field itself. Its fractality was previously known only for step
		data with rational jumps. We prove the general case. For arbitrary
		real data of bounded variation with at least one jump, the density
		of the free Schr\"odinger evolution on the torus is fractal at
		almost every time, with graph of upper box dimension exactly three
		halves. Its critical Sobolev mass diverges logarithmically, at a
		rate given in closed form by Wiener's jump statistic of the datum
		and independent of the positions and phases of the jumps. The same
		rate is halved along time traces and doubled for the Airy flow, so
		the critical mass probes the dispersion relation. Transported by
		the rank calculus of derangetropy operators, the theory equips
		every probability law with a quantum carpet: Gauss-sum revivals on
		quantile cells at rational times, a universal fractal at almost
		every other. The spectral growth rate of a single measured
		intensity profile returns the jump content of the grating, a
		prediction open to test in optical and matter-wave
		interferometry.}
	
	\keywords{Talbot Effect, Dispersive Quantization, Fractal Solutions,
		Schr\"odinger Equation, Bounded Variation, Wiener's Theorem, Quantum
		Carpets, Rank Transformations, Derangetropy}
	
	\maketitle
	
	\newpage
	
	\section{Introduction}\label{sec:intro}
	
	When a plane wave passes a periodic grating, the intensity pattern
	behind the grating repeats itself at a fixed distance and, between
	repetitions, weaves a self-similar tapestry: the Talbot effect of
	wave optics, whose mathematical model is the free Schr\"odinger
	evolution of rough periodic data. The dichotomy is now classical on
	the side of the \emph{field}. At rational multiples of the period
	the evolution of a step function is again a step function: a
	quantization, or revival, phenomenon governed by quadratic Gauss
	sums~\cite{talbot1836,berry1996,berryklein1996,oskolkov1992}. At
	irrational times the field is a continuous, nowhere differentiable
	function. For data of bounded variation its graph has box
	dimension exactly three halves at almost every time, a theorem of
	Rodnianski~\cite{rodnianski2000} building
	on~\cite{kapitanski1999,oskolkov1992}, with refinements and
	extensions by Erdo\u{g}an and
	Tzirakis~\cite{erdogan2016,erdogan2019}, to higher dimensions by
	Erdo\u{g}an, Huynh, and McConnell~\cite{erdogan2024}, and to
	Schr\"odinger evolutions with potentials by Cho, Kim, Kim, Kwon,
	and Seo~\cite{cho2024}.
	
	The observable, however, is not the field. What an optical or
	matter-wave experiment records is the intensity, the squared
	modulus of the field. For the intensity the fractality question
	has remained open except in one special case: for step data with
	jumps at rational points, Chousionis, Erdo\u{g}an, and
	Tzirakis~\cite{chousionis2015} proved density fractality through
	the revival arithmetic of Gauss sums, and stated the general case
	as beyond that method. The obstruction is structural. Unitarity
	controls the field and not its modulus. The conserved Sobolev
	roughness of the field could in principle hide in the phase and
	cancel in the intensity, and no smoothing estimate rules this
	out.
	
	This paper resolves the general case. For real data of bounded
	variation with at least one jump, we prove a strong law of large
	numbers for the critical Sobolev mass of the evolved density. The
	partial sums that measure the density's mass at the critical
	Sobolev regularity of exponent one half grow logarithmically at
	almost every time. The almost-sure rate is given in closed form by
	Wiener's quantitative theorem on the jump content of a function of
	bounded variation~\cite{wiener1924,zygmund2002}: the squared norm
	of the datum times the summed squares of its jumps, divided by
	twice the square of pi. At almost every time the density therefore
	lies outside the critical Sobolev space. Combined with a smoothing
	estimate placing it just below H\"older regularity one half, its
	graph has upper box dimension exactly three halves.
	
	The mechanism is arithmetic rather than analytic. Within each
	Fourier mode of the density the time frequencies are
	collision-free, so the time average of the mode's squared modulus
	is a phase-blind positive sum that no arrangement of the jumps can
	reduce. Wiener's theorem converts the weighted sum of these
	averages into the logarithmic law. Across modes, collisions are
	confined to a sparse lattice, so the variance stays bounded while
	the mean diverges, and the Borel--Cantelli lemma upgrades the mean
	divergence to an almost-sure law. The phases keep their freedom at
	every fixed scale and lose it in the aggregate. The same collision
	arithmetic computes the dispersion dependence of the constant.
	Along time traces the quadratic frequency scaling halves it, and
	for the Airy flow the paired collision classes double it: the
	critical mass of the density is an exactly measurable probe of
	resonance structure.
	
	The second circle of results transports this theory onto probability
	laws. A derangetropy operator, the term a portmanteau of
	derangement and entropy coined in~\cite{ataei2024} for functionals
	tracking the cyclical rearrangement of information in a law,
	reweights a probability density by a fixed profile of its own
	cumulative distribution function. Three operators of this form were
	introduced in~\cite{ataei2024,ataei2025} as models of cyclical
	information flow, and the family is characterized and developed as
	a calculus in~\cite{ataei2026}. The preliminaries the present
	paper needs, the operators, their transport formula, and the
	canonical kernel, are recalled in Section~\ref{sec:prelim}. Evolving the canonical modulation profile,
	the squared ground state of the Dirichlet Laplacian of rank space,
	under the Schr\"odinger dynamics of the unit interval yields the
	\emph{carpet flow}: a periodic curve of modulations that prints the
	Talbot dynamics on every absolutely continuous law at once. At
	rational times the modulated law revives into a quadratic Gauss-sum
	histogram on quantile cells. At almost every time it weaves a
	fractal of upper box dimension exactly three halves, in every
	quadrature of the amplitude and, by the strong law, for the density
	itself. And the information divergence from the base law breathes
	periodically, with universal time-averaged chi-square level and no
	long-run entropy production.
	
	Rank space is thus a quantum box that every probability law
	carries. The results assert that the intensity carpet behind an
	\emph{arbitrary} grating of bounded variation, irregular slit
	positions, rough transmission profiles, has upper box dimension
	exactly three halves at almost every distance, its critical mass
	diverging at the rate set by Wiener's jump statistic of the
	grating. The prediction is open to direct test on measured carpets
	in optical or matter-wave interferometry.
	
	The remainder of the paper is organized as follows.
	Section~\ref{sec:prelim} recalls the rank calculus and constructs
	the unitary lift and the torus dictionary.
	Section~\ref{sec:carpet} develops the carpet flow: the observable
	clock, the quantile mosaics of rational times, the information
	ledger, and the universal carpet theorem.
	Section~\ref{sec:stronglaw} states and proves the strong law for
	the critical mass, its universality over seeds and laws, and the
	collision arithmetic of the dispersion. We conclude in
	Section~\ref{sec:conclusion}; proofs are collected in
	Appendix~\ref{app:proofs}.
	
	\section{Preliminaries}\label{sec:prelim}
	
	Write $\mathcal{D}_{+}$ for the class of probability laws on
	$\mathbb{R}$ whose density $f$ is continuous and positive on the
	interior of an interval support; $F$ denotes the distribution
	function, $Q = F^{-1}$ the quantile function, and $z = F(x)$ the
	\emph{rank} of the point $x$. By the probability integral
	transform, $F(X)$ is uniform on $(0,1)$ for $X \sim f$: in its own
	rank coordinate every law of $\mathcal{D}_{+}$ is uniform, and for
	every integrable $\Psi$ on $(0,1)$,
	\begin{equation}
		\int_{\mathbb{R}}\Psi\big(F(x)\big)\,f(x)\,dx
		= \int_{0}^{1}\Psi(z)\,dz,
		\label{eq:universality}
	\end{equation}
	the \emph{universality principle}: integrals of rank functionals do
	not depend on the law. A \emph{kernel} is a probability density $w$
	on $[0,1]$, and the \emph{derangetropy operator} with kernel $w$
	acts on laws by
	\begin{equation}
		\rho_{w}[f] = w(F)\,f,
		\label{eq:operator}
	\end{equation}
	a probability density again by the identity~\eqref{eq:universality};
	the construction originates in the information-theoretic
	functionals of~\cite{ataei2024,ataei2025}.
	These operators compose through the interval maps
	$A_{w}(z) = \int_{0}^{z}w$, by the \emph{transport formula}
	\begin{equation}
		F_{\rho_{w}[f]} = A_{w}\circ F,
		\label{eq:transport}
	\end{equation}
	obtained by integrating the relation~\eqref{eq:operator}. The
	composition of two operators is again a derangetropy operator,
	with interval map the composition of the two maps, so the dynamics
	of the calculus is exactly solvable. Of the calculus developed
	in~\cite{ataei2026} the present paper uses only the transport
	formula~\eqref{eq:transport}, the universality
	principle~\eqref{eq:universality}, and the canonical kernel now
	described. The \emph{canonical kernel} is the squared ground
	state of the Dirichlet Laplacian of rank space: writing
	\begin{equation}
		H_{0} := -\tfrac12\,\frac{d^{2}}{dz^{2}}
		\qquad\text{on } (0,1),\ \text{Dirichlet conditions},
		\label{eq:H0}
	\end{equation}
	with eigenfunctions $\varphi_{k} = \sqrt2\sin(k\pi z)$ and
	eigenvalues $k^{2}\pi^{2}/2$, the canonical kernel is
	\begin{equation}
		w(z) = |\varphi_{1}(z)|^{2} = 2\sin^{2}(\pi z),
		\label{eq:canonical-kernel}
	\end{equation}
	and the canonical amplitude $\sin(\pi z) = \sqrt{w/2}$ is, up
	to normalization, the ground state itself.
	
	The object of this paper is the \emph{unitary lift} of the
	modulation. Let $\mathbf{1}$ denote the amplitude of the uniform
	kernel, the constant function $1$ on $(0,1)$, and let
	\begin{equation}
		u_{\tau} := e^{-i\tau H_{0}}\mathbf{1}
		\label{eq:unitary-lift}
	\end{equation}
	be its Dirichlet evolution. At every $\tau$ the function
	$w_{\tau} := |u_{\tau}|^{2}$ is a kernel, since
	$\int_{0}^{1}w_{\tau} = \Vert u_{\tau}\Vert_{L^{2}}^{2}
	= \Vert\mathbf{1}\Vert_{L^{2}}^{2} = 1$ by unitarity: mass
	conservation is built into the geometry rather than imposed. The
	odd extension of $\mathbf{1}$ to the doubled circle
	$\mathbb{T}_{2} := \mathbb{R}/2\mathbb{Z}$ is the square wave,
	the initial datum of the Talbot effect. The eigenfunction
	expansion of the evolution then becomes the free Schr\"odinger
	evolution of the square wave on $\mathbb{T}_{2}$; the precise
	dictionary is the display~\eqref{eq:carpet-seed} below. More generally, for $g$ of
	bounded variation on the standard torus with Fourier coefficients
	$\hat g(n)$ and jumps $\{J_{j}\}$, we write throughout
	\begin{equation}
		u(x,t) = \sum_{n}\hat g(n)\,e^{2\pi i(nx - n^{2}t)},
		\qquad
		w = |u|^{2}
		\label{eq:torus-evolution}
	\end{equation}
	for the free evolution and its density, and $P_{N}$ for the dyadic
	Fourier block retaining the frequencies $|n| \in [N, 2N)$.

	\section{Quantum Carpets}\label{sec:carpet}
	
	This section develops the unitary axis of the rank calculus, on
	which the modulation kernel itself evolves under the
	Schr\"odinger dynamics of rank space. With the unitary
	lift~\eqref{eq:unitary-lift} of Section~\ref{sec:prelim}, define
	the \emph{carpet flow} of
	a law $f \in \mathcal{D}_{+}$ as
	\begin{equation}
		f_{\tau} := \rho_{w_{\tau}}[f]
		= \big|u_{\tau}(F)\big|^{2}\,f,
		\qquad
		w_{\tau} = |u_{\tau}|^{2}:
		\label{eq:carpet-flow}
	\end{equation}
	the carpet flow does nothing but let the Dirichlet eigenfunction
	expansion of the uniform amplitude rotate at its natural
	frequencies, printed on $f$ through its ranks. At $\tau = 0$ the
	expansion is coherent and $f_{0} = f$. At every $\tau$ the kernel
	$w_{\tau}$ has unit mass by unitarity, as noted in
	Section~\ref{sec:prelim}, so each $f_{\tau}$ is a law
	automatically. The carpet flow is thus a closed curve inside the modulation family, traversed at constant unitary
	speed. On the doubled circle
	$\mathbb{T}_{2} := \mathbb{R}/2\mathbb{Z}$, where the odd extension
	of $\mathbf{1}$ is the square wave, the seed and its evolution read
	\begin{equation}
		\begin{gathered}
			u_{s}(z) = \sum_{n\ \mathrm{odd}}c_{n}\,
			e^{-2\pi in^{2}s}\,e^{i\pi nz},\\[2pt]
			c_{n} := \frac{2}{i\pi n},
			\qquad
			s := \frac{\pi\tau}{4},
		\end{gathered}
		\label{eq:carpet-seed}
	\end{equation}
	the rescaled time $s$ making the mode-$n$ phase
	$e^{-2\pi in^{2}s}$. This is the initial datum of the Talbot
	effect of wave optics, carried by rank space, and the dilation
	prints its evolution on every law at once. Throughout this section $P_{N}$ denotes the
	dyadic block of the expansion~\eqref{eq:carpet-seed}, retaining the
	frequencies $|n| \in [N, 2N)$.
	
	\begin{proposition}[The observable clock]\label{prop:carpet-clock}
		The amplitude flow $s \mapsto u_{s}$ has period $1$, so the flow
		of laws is periodic in $\tau$ with period $4/\pi$; its minimal
		period is $1/(2\pi)$, the density flow $s \mapsto w_{s}$ having
		minimal period $\tfrac18$: for every law and every $\tau$,
		\begin{equation}
			\begin{gathered}
				f_{\tau + \frac{1}{2\pi}} = f_{\tau},\\[2pt]
				u_{s+1/8} = e^{-i\pi/4}\,u_{s}.
			\end{gathered}
			\label{eq:carpet-period}
		\end{equation}
		Within one period the density flow is a palindrome,
		$w_{1/8 - s} = w_{s}$, and at every time the kernel is symmetric
		about the median rank, $w_{s}(1-z) = w_{s}(z)$.
	\end{proposition}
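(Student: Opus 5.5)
The plan is to work entirely with the Fourier expansion \eqref{eq:carpet-seed} on $\mathbb{T}_{2}$. All the claims reduce to elementary arithmetic of the phases $e^{-2\pi i n^{2}s}$ over odd $n$, together with the symmetries of the coefficients $c_{n}=2/(i\pi n)$. Since $f_{\tau}=w_{\tau}(F)\,f$ and $F$ maps the interior of the support onto $(0,1)$, every periodicity or symmetry statement for the laws follows from the corresponding statement for the kernel $w_{s}$ on $(0,1)$. The conversion between the two clocks is $s=\pi\tau/4$.

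\emph{Periods.} Since $n^{2}\in\mathbb{Z}$, we have $u_{s+1}=u_{s}$, which is period $4/\pi$ in $\tau$. For the finer clock I use that every odd $n$ satisfies $n^{2}\equiv 1 \pmod 8$. Hence $e^{-2\pi i n^{2}/8}=e^{-i\pi/4}$ for all modes simultaneously, which gives $u_{s+1/8}=e^{-i\pi/4}u_{s}$. Taking moduli, $w_{s+1/8}=w_{s}$, i.e. $f_{\tau+1/(2\pi)}=f_{\tau}$.

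\emph{Minimality.} For the amplitude, $u_{s+p}=u_{s}$ forces $p\,n^{2}\in\mathbb{Z}$ for every odd $n$ (compare the coefficients of $e^{i\pi nz}$), and $n=1$ gives $p\in\mathbb{Z}$. For the density, I expand $w_{s}=|u_{s}|^{2}$ in $z$. The coefficient of $e^{i\pi kz}$ for $k=2$ is
\begin{equation*}
\sum_{m\ \mathrm{odd}} c_{m+2}\,\overline{c_{m}}\;e^{-2\pi i\,(4m+4)s},
\end{equation*}
a Fourier series in $s$ whose frequencies are $8j$ with $j=(m+1)/2$. The $j=1$ term ($m=1$) has coefficient $c_{3}\overline{c_{1}}=4/(3\pi^{2})\neq 0$, and no other $m$ produces frequency $8$. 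A trigonometric series in $s$ that contains the frequency $8$ with a nonzero coefficient cannot have a period $p$ with $8p\notin\mathbb{Z}$. So the minimal period of $w_{s}$ is exactly $1/8$, i.e. $1/(2\pi)$ in $\tau$. Since $F$ is onto $(0,1)$, the same holds for $f_{\tau}$. This step is the only one requiring care: one must check that the frequency-$8$ coefficient is not cancelled, which holds because a single $m$ contributes to it.

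\emph{Symmetries.} Because $c_{n}$ is purely imaginary and odd, $\overline{c_{n}}=c_{-n}$. Conjugating \eqref{eq:carpet-seed} and reindexing $n\to -n$ gives $\overline{u_{s}(z)}=u_{-s}(z)$. Hence $w_{-s}=w_{s}$, and combined with the period $1/8$ this yields $w_{1/8-s}=w_{s}$. For the median-rank symmetry, $e^{i\pi n(1-z)}=(-1)^{n}e^{-i\pi nz}=-e^{-i\pi nz}$ for odd $n$. Reindexing $n\to-n$ and using $c_{-n}=-c_{n}$ (the phase $n^{2}$ is even in $n$) gives $u_{s}(1-z)=u_{s}(z)$, hence $w_{s}(1-z)=w_{s}(z)$. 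Equivalently, the odd Dirichlet modes $\sin(k\pi z)$ present in $\mathbf{1}$ are all symmetric about $z=\tfrac12$.
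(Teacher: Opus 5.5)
Your proposal is correct and follows the paper's proof essentially step by step. You use $n^{2}\equiv 1 \pmod 8$ for the eighth-period phase, the nonvanishing frequency-$8$ term $c_{3}\overline{c_{1}} = 4/(3\pi^{2})$ of the $e^{i\pi\cdot 2z}$ coefficient of $w_{s}$ for minimality, $\overline{c_{n}} = c_{-n}$ for time reversal and hence the palindrome, and $z\mapsto 1-z$ undone by $n\mapsto -n$ for the median symmetry. One small difference: you note that all frequencies of that coefficient are multiples of $8$, so it is an ordinary $\tfrac18$-periodic Fourier series and plain Fourier uniqueness suffices, where the paper appeals to Bohr almost-periodic uniqueness.
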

	
	\begin{proof}
		The amplitude period is $e^{-2\pi in^{2}} = 1$. The seed populates
		only odd modes, and odd squares are $1$ modulo $8$, so advancing $s$
		by $\tfrac18$ multiplies every mode by the common eighth root of
		unity $e^{-i\pi/4}$, which is invisible in the modulus. The
		observable flow thus lives on a circle eight times shorter than
		its unitary dressing, the discrepancy carried by exactly the roots
		of unity that govern quadratic Gauss sums. For minimality, the
		coefficient of $e^{2\pi i\cdot 2z}$ in $w_{s}$ is
		\begin{equation}
			\widehat{w_{s}}(2) = \sum_{n\ \mathrm{odd}}
			\frac{4}{\pi^{2}n(n-2)}\;e^{-8\pi i(n-1)s},
			\label{eq:mode-two}
		\end{equation}
		a series in $s$ whose frequencies $4(n-1)$ are pairwise distinct
		and whose coefficients never vanish; a period $p$ of the density
		flow equates the almost periodic function~\eqref{eq:mode-two} with
		its translate, and since the frequencies and Fourier--Bohr
		coefficients of an almost periodic function are unique, this
		forces $4(n-1)p \in \mathbb{Z}$ for every odd $n$, and $n = 3$
		gives $8p \in \mathbb{Z}$. Since $\overline{c_{n}} = c_{-n}$ for the coefficients of the
		seed~\eqref{eq:carpet-seed}, conjugating the
		series gives $\overline{u_{s}} = u_{-s}$,
		whence $w_{-s} = w_{s}$ and, with the
		relation~\eqref{eq:carpet-period}, the palindrome. The median
		symmetry is the substitution $z \mapsto 1-z$, which multiplies the
		mode $n$ by $e^{i\pi n} = -1$ and is undone by $n \mapsto -n$.
	\end{proof}
	
	Figure~\ref{fig:carpet} displays one observable period of the
	flow, on rank space and transported onto the Gaussian law: the
	palindrome of the clock, the bright revivals of rational times, and
	the fractal interstices between them are visible to the eye.
	
	\begin{figure}[t]
		\centering
		\includegraphics[width=\textwidth]{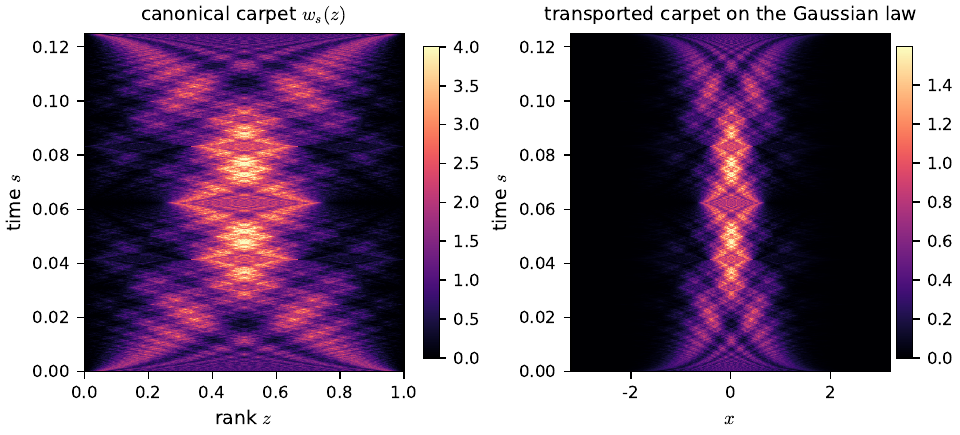}
		\caption{One observable period of the carpet flow, computed from
			the series~\eqref{eq:carpet-seed}. Left: the canonical carpet, the
			kernel $w_{s}$ over rank $z$ and rescaled time $s \in [0,\tfrac18]$;
			rational times print quantile mosaics, and the flow is palindromic
			about $s = \tfrac{1}{16}$. Right: the same period transported onto
			the standard Gaussian law, the density
			$f_{\tau}(x) = w_{s}(F(x))\,f(x)$: every law carries the same
			carpet in its own quantile coordinates.}\label{fig:carpet}
	\end{figure}
	
	At rational times the carpet flow revives into finitely many
	quantile cells.
	
	\begin{theorem}[Quantile mosaics]\label{thm:quantile-mosaic}
		Let $s = p/q$ in lowest terms. The phase
		$\chi(n) = e^{-2\pi in^{2}p/q}$ is periodic in $n$ with period $q$,
		and its discrete Fourier expansion yields the revival
		\begin{equation}
			\begin{gathered}
				u_{p/q}(z) = \sum_{r=0}^{q-1}\hat c(r)\,
				u_{0}\Big(z - \frac{2r}{q}\Big),\\[2pt]
				\hat c(r) = \frac{1}{q}\sum_{n=0}^{q-1}
				e^{-2\pi i\left(pn^{2} - rn\right)/q},
			\end{gathered}
			\label{eq:carpet-revival}
		\end{equation}
		a finite superposition of translated square waves whose
		coefficients are generalized quadratic Gauss
		sums~\cite{berndt1998}. Consequently $w_{p/q}$ is constant on the
		cells of the quantile lattice $\{l/q\}$, and for every law $f$,
		\begin{equation}
			f_{(p/q)\cdot 4/\pi} = w_{p/q}(F)\,f,
			\label{eq:mosaic-histogram}
		\end{equation}
		a histogram reweighting of $f$ on at most $q$ of its own quantile
		cells, of total mass one. In particular the rational-time orbit of
		every law consists of explicitly computable histogram modulations
		of that law.
	\end{theorem}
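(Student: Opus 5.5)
The plan is a direct computation on the doubled circle $\mathbb{T}_{2}$, using the seed expansion~\eqref{eq:carpet-seed}. First I check that the phase is periodic. Since $(n+q)^{2}p/q = n^{2}p/q + 2np + qp$ and $2np+qp \in \mathbb{Z}$, we have $\chi(n+q)=\chi(n)$. Finite Fourier inversion on $\mathbb{Z}/q\mathbb{Z}$ then gives
\begin{equation*}
\chi(n) = \sum_{r=0}^{q-1}\hat c(r)\,e^{-2\pi i rn/q},
\qquad
\hat c(r) = \frac1q\sum_{m=0}^{q-1}\chi(m)\,e^{2\pi i rm/q},
\end{equation*}
which is exactly the generalized Gauss sum displayed in~\eqref{eq:carpet-revival}. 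I will verify the sign conventions by substituting back into the inversion formula.

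Next I insert this identity into the series for $u_{p/q}$ and exchange the finite $r$-sum with the $n$-sum. The exchange is legitimate because the series converges in $L^{2}(\mathbb{T}_{2})$ and the $r$-sum is finite. The factor $e^{-2\pi i rn/q}\,e^{i\pi nz}$ equals $e^{i\pi n(z-2r/q)}$, so each $r$-term is $\hat c(r)\,u_{0}(z-2r/q)$. Here $u_{0}$ is the square wave: it equals $1$ on $(0,1)$ and $-1$ on $(1,2)$ modulo $2$. This proves~\eqref{eq:carpet-revival} as an identity in $L^{2}$, hence almost everywhere. For the statements that follow, that is all that is needed, since $w$ is only used as a density.

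For the cell structure, note that the translate $u_{0}(\cdot-2r/q)$ jumps only at $z \equiv 2r/q$ and at $z \equiv 1+2r/q \pmod 2$. Both points lie in $\tfrac1q\mathbb{Z}$, because $1 = q/q$. Hence every term is constant on each open cell $(l/q,(l+1)/q)$. The finite superposition $u_{p/q}$ is therefore constant on these cells, and so is $w_{p/q}=|u_{p/q}|^{2}$. Restricted to $(0,1)$, this gives at most $q$ cells, fewer when adjacent values coincide.

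Finally, I translate time and transport to the law. Since $s=\pi\tau/4$, the rescaled time $s=p/q$ corresponds to $\tau=(p/q)\cdot 4/\pi$, and the definition~\eqref{eq:carpet-flow} gives~\eqref{eq:mosaic-histogram}. The function $w_{p/q}\circ F$ is constant on each preimage $F^{-1}\big((l/q,(l+1)/q)\big)=\big(Q(l/q),Q((l+1)/q)\big)$, which is a quantile cell of $f$. So $f_{\tau}$ is a histogram reweighting of $f$. Its total mass is $\int_{0}^{1}w_{p/q}=1$, by the universality principle~\eqref{eq:universality} and unitarity. The cell weights are explicit: they are $|\sum_{r}\pm\hat c(r)|^{2}$, with the signs determined by which half-period of each translate contains the cell.

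The only delicate point is the convergence bookkeeping. The seed series converges only conditionally pointwise, so I will state the revival in $L^{2}$, hence almost everywhere, and note that values at the finitely many lattice points are irrelevant for densities. The rest is algebra.
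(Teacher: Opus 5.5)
Your proposal is correct and follows the paper's proof essentially step for step. Both arguments get periodicity of $\chi$ from $(n+q)^{2}-n^{2}=2nq+q^{2}$ and expand $\chi$ in the characters of $\mathbb{Z}/q\mathbb{Z}$. Both turn $e^{-2\pi irn/q}$ into translation by $2r/q$ on $\mathbb{T}_{2}$, confine the jumps to $\frac1q\mathbb{Z}$, and obtain unit mass from unitarity. Your $L^{2}$ justification for swapping the finite $r$-sum with the $n$-series, and your explicit cell weights $|\sum_{r}\pm\hat c(r)|^{2}$, are harmless refinements that the paper leaves implicit.
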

	
	\begin{proof}
		Periodicity: $(n+q)^{2} - n^{2} = 2nq + q^{2}$, and
		$p(2nq + q^{2})/q = 2np + pq \in \mathbb{Z}$ for every parity of
		$q$. Expanding $\chi$ in the characters of
		$\mathbb{Z}/q\mathbb{Z}$,
		\begin{equation*}
			\chi(n) = \sum_{r=0}^{q-1}\hat c(r)\,e^{-2\pi irn/q}
		\end{equation*}
		with the coefficients of the
		superposition~\eqref{eq:carpet-revival}, and substituting into the
		series~\eqref{eq:carpet-seed} converts the character
		$e^{-2\pi irn/q}$ into the translation by $2r/q$ on
		$\mathbb{T}_{2}$, since
		$e^{-2\pi irn/q}\,e^{i\pi nz} = e^{i\pi n(z - 2r/q)}$, which is
		the superposition formula~\eqref{eq:carpet-revival}. The jumps of $u_{0}$ lie on the
		integers, so those of the superposition lie on the lattice
		$\{2r/q\} \cup \{1 + 2r/q\}$ modulo $2$, which meets $(0,1)$ inside
		$\{l/q\}$; the squared modulus inherits the cells, and unit mass is
		the unitarity of the dilation.
	\end{proof}
	
	Two instances display the arithmetic. At $s = \tfrac13$ the weight
	takes the values
	\begin{equation}
		w_{1/3} = \Big(\tfrac13,\ \tfrac73,\ \tfrac13\Big)
		\quad\text{on the tercile cells:}
		\label{eq:tercile-mosaic}
	\end{equation}
	at time $\tau = 4/3\pi$ every law concentrates seven ninths of its
	mass on its middle tercile. At $s = \tfrac{1}{16}$, the phases
	depend on $n$ only through the Jacobi symbol,
	$\chi(n) = e^{-i\pi/8}\left(\tfrac{2}{n}\right)$, and summing the
	four translates of the Gauss expansion modulo $8$ gives exactly
	\begin{equation}
		\begin{gathered}
			u_{1/16}\big|_{(0,1)}
			= e^{-i\pi/8}\,\sqrt2\;\mathbf{1}_{(\frac14,\frac34)},\\[2pt]
			w_{1/16} = \big(0,\ 2,\ 2,\ 0\big)
			\quad\text{on the quartile cells}:
		\end{gathered}
		\label{eq:quartile-blackout}
	\end{equation}
	one sixteenth of the way through its period, every law
	extinguishes its outer quartiles and doubles its interquartile
	body; the mosaic is governed by the quadratic residues modulo
	$8$. The
	amplitude in the display~\eqref{eq:quartile-blackout} is, up to
	phase and recentering, the balanced two-valued step
	$g = \mathbf{1}_{(1/4,3/4)} - \mathbf{1}_{(0,1/4)\cup(3/4,1)}$,
	through which the carpet flow passes at the sixteenth of its
	period. Figure~\ref{fig:mosaics} shows this blackout, a richer
	mosaic, and a generic fractal section of the carpet.
	
	\begin{figure}[t]
		\centering
		\includegraphics[width=\textwidth]{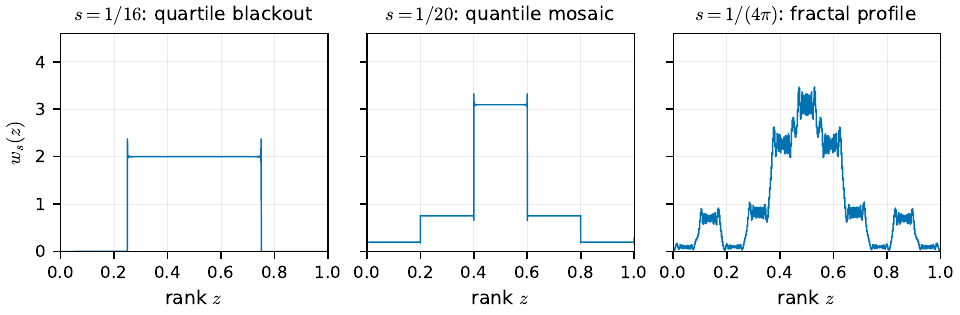}
		\caption{Three sections of the carpet, computed from the truncated
			series~\eqref{eq:carpet-seed}. Left: at $s = \tfrac{1}{16}$ the
			density is the quartile blackout, vanishing identically on the
			outer quartiles and equal to $2$ on the interquartile range.
			Center: the five-cell quantile mosaic at $s = \tfrac{1}{20}$; the
			mosaic lives on the fifths rather than the twentieths, the Gauss
			coefficients with odd $r$ vanishing for $q \equiv 0 \bmod 4$, so
			the cell count of Theorem~\ref{thm:quantile-mosaic} is not sharp.
			Right: at the generic time $s = 1/(4\pi)$ the section is a fractal
			profile of upper box dimension $\tfrac32$.}\label{fig:mosaics}
	\end{figure}
	
	Before the fractal theory, the information ledger of the unitary
	axis, which is exactly computable and exactly bounded.
	
	\begin{theorem}[Bounded breathing of information]\label{thm:carpet-information}
		For every $f \in \mathcal{D}_{+}$:
		\begin{enumerate}[label=\normalfont(\roman*), leftmargin=*]
			\item the time-averaged carpet is the tent modulation,
			\begin{equation}
				\begin{gathered}
					2\pi\int_{0}^{1/2\pi} f_{\tau}\,d\tau
					= \rho_{\Lambda}[f],\\[2pt]
					\Lambda(z) = 4\min(z,\,1-z);
				\end{gathered}
				\label{eq:carpet-tent}
			\end{equation}
			\item the time-averaged chi-square divergence from the base law is
			universal,
			\begin{equation}
				\begin{gathered}
					2\pi\int_{0}^{1/2\pi}
					\chi^{2}\big(f_{\tau}\,\|\,f\big)\,d\tau
					= \frac{2}{3},\\[2pt]
					\chi^{2}\big(f_{\tau}\|f\big)
					= \int_{0}^{1}\big(w_{s} - 1\big)^{2}dz;
				\end{gathered}
				\label{eq:carpet-chi}
			\end{equation}
			\item information breathes but never grows:
			$D(f_{\tau}\,\|\,f) \le \log\big(1 + \chi^{2}(f_{\tau}\|f)\big)$,
			and $\sup_{\tau}\chi^{2}(f_{\tau}\|f)$ is a finite absolute
			constant.
		\end{enumerate}
	\end{theorem}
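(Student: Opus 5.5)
The plan is to reduce all three statements to Fourier computations for the seed~\eqref{eq:carpet-seed}, after moving from the law $f$ to rank space. By the universality principle~\eqref{eq:universality}, every quantity in the theorem that is integrated against $f$ becomes an integral in $z$ over $(0,1)$. In particular
\[
\chi^{2}(f_{\tau}\|f)=\int_{\mathbb{R}}\big(w_{s}(F)-1\big)^{2}f\,dx=\int_{0}^{1}(w_{s}-1)^{2}\,dz ,
\]
which is the second display of~\eqref{eq:carpet-chi}. By Proposition~\ref{prop:carpet-clock}, averaging over one $\tau$-period $1/(2\pi)$ is the same as averaging $s$ over $[0,\tfrac18]$, and by the $\tfrac18$-periodicity of $w_{s}$ this equals the average over $s\in[0,1]$. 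Over the full period, orthogonality of the exponentials $e^{-2\pi i(n^{2}-m^{2})s}$ is available. I write
\[
w_{s}(z)=\sum_{n,m\ \mathrm{odd}}c_{n}\overline{c_{m}}\,e^{-2\pi i(n^{2}-m^{2})s}\,e^{i\pi(n-m)z}
=\sum_{k}\widehat{w_{s}}(k)\,e^{2\pi ikz},
\]
where $k=(n-m)/2$. The spatial frequencies are integers because $n-m$ is even, so Parseval holds on $(0,1)$.

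For (i), I average $w_{s}$ in $s$. Only the pairs with $n^{2}=m^{2}$ survive. The diagonal $n=m$ contributes $\sum_{n\ \mathrm{odd}}4/(\pi^{2}n^{2})=1$. The antidiagonal $m=-n$ contributes $c_{n}\overline{c_{-n}}=-4/(\pi^{2}n^{2})$ at frequency $e^{2\pi inz}$, and summing gives
\[
-\frac{8}{\pi^{2}}\sum_{n>0\ \mathrm{odd}}\frac{\cos 2\pi nz}{n^{2}}.
\]
The classical identity $\sum_{n\ \mathrm{odd}>0}\cos(nx)/n^{2}=\tfrac{\pi}{8}(\pi-2|x|)$ on $[-\pi,\pi]$ turns this into $4z-1$ on $[0,\tfrac12]$, and the median symmetry of Proposition~\ref{prop:carpet-clock} extends it to the rest of $(0,1)$. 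Adding the diagonal gives $\Lambda(z)=4\min(z,1-z)$. Transporting through $f_{\tau}=w_{s}(F)f$ gives~\eqref{eq:carpet-tent}. The exchange of the time integral with the pointwise product is justified by Tonelli, since everything is nonnegative. The exchange with the double series is justified by working with $L^{2}$ partial sums, which converge in $L^{2}$ uniformly in $s$ by unitarity.

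For (ii), I need the time average of $\int_{0}^{1}w_{s}^{2}\,dz-1=\sum_{k\neq0}|\widehat{w_{s}}(k)|^{2}$; the mode $\widehat{w_{s}}(0)=1$ is constant. Within a fixed $k\neq0$, the time frequency is $n^{2}-m^{2}=2k(n+m)$. These frequencies are pairwise distinct as $n$ varies, which is the collision-free arithmetic. So the time average of $|\widehat{w_{s}}(k)|^{2}$ is the phase-blind sum $\sum_{n}|c_{n}|^{2}|c_{n-2k}|^{2}$. Summing over $k\neq0$ gives
\[
\sum_{n\neq m\ \mathrm{odd}}\frac{16}{\pi^{4}n^{2}m^{2}}
=\Big(\sum_{n\ \mathrm{odd}}\frac{4}{\pi^{2}n^{2}}\Big)^{2}-\frac{16}{\pi^{4}}\cdot 2\cdot\frac{\pi^{4}}{96}
=1-\frac13=\frac23 .
\]
Two interchanges need justification: time averaging with the inner series in $n$, and with the outer sum in $k$. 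For both I will use Parseval on the two-torus in $(s,z)$. The coefficient family $c_{n}\overline{c_{m}}$ is square summable there. The absolute bound proved in (iii) supplies the domination needed to pass from truncated to full sums.

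For (iii), the entropy bound is Jensen applied to the concave logarithm. Writing $p=f_{\tau}$ and $q=f$,
\[
D(p\|q)=\int p\log\frac{p}{q}\le\log\int\frac{p^{2}}{q}=\log\big(1+\chi^{2}\big).
\]
The uniform bound is the step I expect to be the real obstacle. Strichartz-type $L^{4}$ estimates only control time averages, so instead I will bound each Fourier mode absolutely:
\[
|\widehat{w_{s}}(k)|\le\sum_{n\ \mathrm{odd}}|c_{n}||c_{n-2k}|
=\frac{4}{\pi^{2}}\sum_{n\ \mathrm{odd}}\frac{1}{|n|\,|n-2k|}\lesssim\frac{1+\log|k|}{|k|},
\]
uniformly in $s$. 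I will prove this by splitting the sum at $|n|\le 2|k|$ and $|n|>2|k|$. Squaring and summing over $k$ then gives $\sup_{s}\int_{0}^{1}w_{s}^{2}\,dz<\infty$, an absolute constant, and hence $\sup_{\tau}\chi^{2}(f_{\tau}\|f)<\infty$ independently of $f$. The same absolute convergence also legitimizes the limit interchanges used in (i) and (ii).
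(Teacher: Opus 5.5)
Your proposal is correct and follows the paper's proof. Universality reduces everything to rank space, and the time average keeps only the static pairs $n=\pm m$, which yields the triangular-wave series in (i). Part (iii) is Jensen together with the $s$-independent bound $\int_0^1 w_s^2\,dz\le\sum_k\big(\sum_n|c_nc_{n-2k}|\big)^2<\infty$. The only difference is bookkeeping in (ii): you sum the phase-blind per-mode averages $\sum_n|c_n|^2|c_{n-2k}|^2$ over $k\neq0$ to get $1-\tfrac13$ directly, while the paper counts the surviving quadruples in $\iint w_s^2$ to get $\tfrac53$ and then subtracts the squared mass. Both rest on the same fact, that equal sums and equal sums of squares force equal pairs.
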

	
	\begin{proof}
		All three statements live in rank space, since by the universality
		principle~\eqref{eq:universality} the divergences depend only on
		$w_{s}$. For part (i), the time average retains exactly the static
		frequencies: the coefficient of $e^{2\pi imz}$ in $w_{s}$ is
		$\sum_{n} c_{n}\overline{c_{n-2m}}\,e^{-2\pi i\cdot 4m(n-m)s}$, and
		the frequency $4m(n-m)$ vanishes precisely at $n = m$, admissible
		only for odd $m$, with static weight
		$c_{m}\overline{c_{-m}} = -4/(\pi^{2}m^{2})$. Hence
		\begin{equation*}
			\int_{0}^{1}w_{s}\,ds\;(z)
			= 1 - \frac{8}{\pi^{2}}\sum_{m\ \mathrm{odd},\,\ge1}
			\frac{\cos(2\pi mz)}{m^{2}}
			= 4\min(z,\,1-z),
		\end{equation*}
		the classical triangular-wave series. For part (ii), expand
		$\int_{0}^{1}w_{s}^{2}\,dz$ by Parseval and average in $s$: a
		quadruple $(n_{1},n_{2},n_{3},n_{4})$ of odd modes survives both
		the spatial pairing and the time average precisely when
		$n_{1} + n_{3} = n_{2} + n_{4}$ and
		$n_{1}^{2} + n_{3}^{2} = n_{2}^{2} + n_{4}^{2}$, and equal sums
		with equal sums of squares force equal pairs
		$\{n_{1},n_{3}\} = \{n_{2},n_{4}\}$. The two pairings give
		\begin{equation*}
			\int_{0}^{1}\!\!\int_{0}^{1}w_{s}^{2}\,dz\,ds
			= 2\Big(\sum_{n}|c_{n}|^{2}\Big)^{2} - \sum_{n}|c_{n}|^{4}
			= 2 - \frac13 = \frac53,
		\end{equation*}
		since $\sum|c_{n}|^{2} = 1$ and
		$\sum|c_{n}|^{4} = (32/\pi^{4})\sum_{\mathrm{odd}}n^{-4} = \tfrac13$;
		subtracting the squared mass gives the value~\eqref{eq:carpet-chi}.
		For part (iii), Jensen's inequality with respect to the probability
		measure $w_{s}\,dz$ gives
		$\int w_{s}\log w_{s} \le \log\int w_{s}^{2}$, and the uniform
		bound follows from
		$\int_{0}^{1}w_{s}^{2}\,dz
		\le \sum_{k}\big(\sum_{n}|c_{n}c_{n-k}|\big)^{2}$, a convergent
		series independent of $s$.
	\end{proof}
	
	\begin{remark}[No entropy production]
		On the unitary axis, information breathes but never grows. The
		divergence from the base law is a bounded palindromic function of
		time, with the universal mean-square level $\tfrac23$ of the
		identity~\eqref{eq:carpet-chi}, so complexity cannot appear as
		entropy production; the theorems below show that it appears
		instead as geometry, as fractal dimension. The tent
		identity~\eqref{eq:carpet-tent} is a point of contact with the
		canonical kernel: the tent $\Lambda$ has the same unit mass, the
		same endpoint zeros, and the same central peak of height two, so
		the time-averaged carpet is a canonical-type modulation sharpened
		to a corner.
	\end{remark}
	
	At irrational times the mosaics give way to fractals, and the
	remainder of the section proves the dichotomy in its sharp form.
	The two mechanisms are conservation and generic smoothing. On one
	hand the evolution is a unimodular Fourier multiplier, so
	$|\widehat{u_{s}}(n)| = 2/(\pi|n|)$ for every $s$ and every Sobolev
	norm is frozen. In particular
	$\sum_{|n|\le N}|n|\,|\widehat{u_{s}}(n)|^{2} \asymp \log N$,
	so $u_{s}$ lies outside $H^{1/2}(\mathbb{T}_{2})$ at every time:
	the roughness of the seed's jump is indestructible. On the other
	hand, at almost every time the roughness is spread rather than
	concentrated.
	
	\begin{lemma}[Generic smoothing]\label{lem:carpet-smoothing}
		For almost every $s$: $u_{s}$ is continuous, and for every
		$\varepsilon > 0$ there is $C_{s,\varepsilon}$ with
		\begin{equation}
			\Vert P_{N}u_{s}\Vert_{\infty}
			\le C_{s,\varepsilon}\,N^{-1/2+\varepsilon}
			\quad\text{for all dyadic } N,
			\label{eq:carpet-blocks}
		\end{equation}
		so that $u_{s}$ and $w_{s}$ belong to
		$C^{1/2-\varepsilon}(\mathbb{T}_{2})$ for every $\varepsilon > 0$.
		At every rational $s$ the conclusion fails: $u_{s}$ is a step
		function.
	\end{lemma}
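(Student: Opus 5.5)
The plan is the classical Oskolkov--Kapitanski--Rodnianski route: a Weyl-sum bound driven by Diophantine properties of $s$, followed by summation by parts against the coefficients $c_n = 2/(i\pi n)$.

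\textbf{Step 1 (Weyl bound).} For $s$ with continued-fraction approximants $a/q$, $|s-a/q|\le 1/q^2$, the classical Weyl/Hardy--Littlewood estimate gives
\[
\sup_{x}\Big|\sum_{n\le M} e^{2\pi i(n x - n^{2}s)}\Big| \lesssim \frac{M}{\sqrt q} + \sqrt{M\log q} + \sqrt{q\log q}.
\]
Restricting to odd $n$ is a harmless dilation. Choosing a denominator $q \approx M$ gives the bound $M^{1/2+\varepsilon}$.

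\textbf{Step 2 (which $s$ qualify).} Such a denominator exists at every scale for every $s$ whose continued-fraction denominators satisfy $q_{k+1}\le C_{s,\varepsilon} q_k^{1+\varepsilon}$. By Khinchin's theorem, equivalently a Borel--Cantelli argument on $\sum_q q\cdot q^{-2-\varepsilon}<\infty$, this is a full-measure set of $s$.

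\textbf{Step 3 (block estimate).} Fix such an $s$ and the block $P_N u_s$. Write $c_n = 2/(i\pi n)$, with $\Delta c_n = O(N^{-2})$ on the block, and sum by parts against the partial Weyl sums $S_M(z)=\sum_{N\le n< M} e^{-2\pi i n^2 s}e^{i\pi n z}$, each of which is $\lesssim N^{1/2+\varepsilon}$ uniformly in $z$ by Steps 1--2. This yields
\[
\|P_N u_s\|_\infty \lesssim N^{-1}\cdot N^{1/2+\varepsilon} + N\cdot N^{-2}\cdot N^{1/2+\varepsilon} \lesssim N^{-1/2+\varepsilon},
\]
which is \eqref{eq:carpet-blocks}. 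Negative frequencies are handled by $n\mapsto -n$.

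\textbf{Step 4 (continuity and H\"older regularity).} Summing the dyadic blocks converges uniformly, so $u_s$ is continuous. The Littlewood--Paley characterization of the H\"older--Zygmund space (Bernstein inequality $\|\partial P_N u\|_\infty\lesssim N\|P_N u\|_\infty$ and the standard split at scale $N\sim |h|^{-1}$) turns $\|P_N u_s\|_\infty\lesssim N^{-1/2+\varepsilon}$ into $u_s\in C^{1/2-2\varepsilon}$. Renaming $\varepsilon$ gives the stated exponent. Since $C^\alpha$ is an algebra, $w_s=|u_s|^2 = u_s\overline{u_s}$ lies in the same space.

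\textbf{Step 5 (rational times).} At rational $s$, Theorem~\ref{thm:quantile-mosaic} exhibits $u_s$ as a finite superposition of translated square waves, hence a step function. It is nonconstant: by unitarity $|\widehat{u_s}(n)|=|c_n|$ is not summable, so $u_s$ cannot be a trigonometric polynomial, let alone constant. A nonconstant step function has a jump, so it is neither continuous nor H\"older.

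The main obstacle is the uniformity in $x$ of the Weyl bound together with the Diophantine bookkeeping of Step 2: one needs a single denominator $q$ comparable to $N$ that works for every dyadic $N$, with the constant depending only on $s$ and $\varepsilon$. I would first try the standard approach of taking consecutive convergents $q_k\le N< q_{k+1}$ and using the growth condition $q_{k+1}\lesssim q_k^{1+\varepsilon}$ to keep both error terms at most $N^{1/2+\varepsilon}$.
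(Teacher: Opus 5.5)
Your proposal is correct and follows the paper's proof essentially step for step: a Weyl--Gauss sum bound uniform in the linear phase, a Borel--Cantelli argument giving for a.e.\ $s$ a denominator $q$ with $N^{1-\varepsilon}\lesssim q\le N$ at every dyadic scale, Abel summation against $c_n$, the H\"older--Besov characterization together with the algebra property for $w_s$, and the revival formula at rational times. The differences are only in bookkeeping. You express the Diophantine condition through the growth $q_{k+1}\lesssim q_k^{1+\varepsilon}$ of the continued-fraction denominators, whereas the paper applies Dirichlet's theorem at each dyadic $N$ and discards the denominators $q\le N^{1-\varepsilon}$ directly; the paper also spells out your ``harmless dilation'' as the substitution $n=2l+1$ with time parameter $4s$.
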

	
	The proof, in Appendix~\ref{app:proofs}, rests on the classical
	Gauss--Weyl bounds for quadratic exponential
	sums~\cite{montgomery1994} along the continued-fraction convergents
	of $s$, which are of balanced size for almost every time. The next
	lemma is the heart of the theory: it identifies exactly where the
	fine-scale energy of the carpet lives. Recall that the block
	$P_{N}u_{s}$ carries total squared mass
	$\asymp N^{-1}$, the frozen Sobolev tail.
	
	\begin{lemma}[Equidistribution of block energy]\label{lem:block-equi}
		Let $I \subseteq (0,1)$ be an interval. For almost every $s$, along
		dyadic $N \to \infty$:
		\begin{equation}
			\begin{gathered}
				N\,\big\Vert P_{N}u_{s}\big\Vert_{L^{2}(I)}^{2}
				\;\longrightarrow\; \frac{2}{\pi^{2}}\,|I|,\\[2pt]
				N\int_{I}\big(\mathrm{Re}\,[e^{i\theta}P_{N}u_{s}]\big)^{2}dz
				\;\longrightarrow\; \frac{1}{\pi^{2}}\,|I|
				\ \text{ uniformly in }\theta,
			\end{gathered}
			\label{eq:block-equi-amp}
		\end{equation}
		while the blocks of the density obey the self-referential law
		\begin{equation}
			N\,\big\Vert P_{2N}^{\mathbb{Z}}\,w_{s}\big\Vert_{L^{2}(I)}^{2}
			\;\longrightarrow\; \frac{2}{\pi^{2}}\int_{I}w_{s}(z)\,dz,
			\label{eq:block-equi-density}
		\end{equation}
		where $P_{2N}^{\mathbb{Z}}$ retains the spatial frequencies
		$|k| \in [2N, 4N)$ of $w_{s}$, in the frequency convention of the
		expansion~\eqref{eq:carpet-seed}: products of odd modes populate
		the even frequencies of $\mathbb{T}_{2}$, so the density is
		$1$-periodic and its spectrum lives on the even sublattice, the
		integer lattice of the standard torus, in contrast to the odd
		amplitude frequencies retained by $P_{N}$.
	\end{lemma}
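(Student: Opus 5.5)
The plan is to expand each block quadratically, identify the diagonal (time‑stationary) contribution exactly, and show that the oscillating remainder has mean square in $s$ small enough to be summable over dyadic $N$. Chebyshev's inequality and the Borel--Cantelli lemma then give almost‑sure convergence.

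For the amplitude, write $P_{N}u_{s}=\sum_{n\ \mathrm{odd},\,|n|\in[N,2N)}c_{n}e^{-2\pi in^{2}s}e^{i\pi nz}$, so that
\begin{equation*}
\Vert P_{N}u_{s}\Vert_{L^{2}(I)}^{2}=\sum_{n,m}c_{n}\overline{c_{m}}\,e^{-2\pi i(n^{2}-m^{2})s}\,\widehat{\mathbf 1_{I}}(n,m),
\qquad
\widehat{\mathbf 1_{I}}(n,m):=\int_{I}e^{i\pi(n-m)z}dz .
\end{equation*}
\emph{Diagonal.} The diagonal $n=m$ contributes $|I|\sum|c_{n}|^{2}$. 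Since $|c_{n}|^{2}=4/(\pi^{2}n^{2})$, summing over both signs and over the odd $n$ in $[N,2N)$ gives $|I|\,(2/\pi^{2})N^{-1}(1+o(1))$, which is the claimed constant.

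\emph{Off‑diagonal.} The remainder $R_{N}(s)$ is a trigonometric polynomial in $s$ with coefficients $O\big(N^{-2}\min(1,|n-m|^{-1})\big)$ and frequencies $n^{2}-m^{2}$. By Parseval on $s\in[0,1]$, $\int_{0}^{1}|R_{N}|^{2}ds$ is bounded by a sum over quadruples with $n^{2}-m^{2}=n'^{2}-m'^{2}$. The divisor bound limits each frequency to $O(N^{\varepsilon})$ representations, so
\begin{equation*}
\int_{0}^{1}|R_{N}|^{2}ds\lesssim N^{\varepsilon}\,N^{-4}\sum_{n,m}(1+|n-m|)^{-2}\lesssim N^{-3+\varepsilon}.
\end{equation*}
Hence $\int_{0}^{1}N^{2}|R_{N}|^{2}ds\lesssim N^{-1+\varepsilon}$, which is summable over dyadic $N$. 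Chebyshev and Borel--Cantelli give $N R_{N}(s)\to0$ for almost every $s$.

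\emph{Quadratures.} Use
\begin{equation*}
(\mathrm{Re}[e^{i\theta}P_{N}u])^{2}=\tfrac12|P_{N}u|^{2}+\tfrac12\mathrm{Re}\big[e^{2i\theta}(P_{N}u)^{2}\big].
\end{equation*}
The second term is bounded in modulus, uniformly in $\theta$, by $\big|\int_{I}(P_{N}u_{s})^{2}\big|$. That integral has no stationary part: its phases $e^{-2\pi i(n^{2}+m^{2})s}$ never vanish. The same Parseval, divisor‑bound and Borel--Cantelli argument, now applied to the frequencies $n^{2}+m^{2}$, shows it is $o(N^{-1})$ almost surely. This yields the constant $1/\pi^{2}$ uniformly in $\theta$.

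For the density, the high block $P^{\mathbb Z}_{2N}w_{s}$ collects products $c_{n}\overline{c_{m}}$ with $(n-m)/2$ of size $[2N,4N)$ in the $\mathbb{T}_{2}$ convention.
\begin{itemize}
\item \emph{Paraproduct split.} Split these products according to whether one factor is low frequency ($|m|\ll N$) or both are high. The low–high part is, up to $o(N^{-1/2})$ in $L^{2}$, equal to $u_{s}\cdot\overline{P_{N'}u_{s}}+\overline{u_{s}}\cdot P_{N'}u_{s}$ restricted to the appropriate spectral window.
\item \emph{High–high part.} This part is controlled using Lemma~\ref{lem:carpet-smoothing}: the block bound $\Vert P_{M}u_{s}\Vert_{\infty}\lesssim M^{-1/2+\varepsilon}$, summed over $M\gtrsim N$, gives a contribution $o(N^{-1/2})$ in $L^{2}$.
\item \emph{Weighted limit.} The main term becomes $\int_{I}|u_{s}|^{2}\,|P u_{s}|^{2}$, possibly with a quadrature weighting. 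Since $u_{s}$ is continuous for almost every $s$, approximate $|u_{s}|^{2}=w_{s}$ uniformly by step functions on finitely many subintervals $I_{j}$. Apply the amplitude law \eqref{eq:block-equi-amp} on each $I_{j}$; a countable family of rational intervals suffices, keeping the exceptional null set countable. The cross terms $\mathrm{Re}[\bar u^{2}(Pu)^{2}]$ vanish in the limit by the quadrature statement.
\end{itemize}
This produces $(2/\pi^{2})\int_{I}w_{s}$.

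I expect the main obstacle to be the density part: matching the spectral window $[2N,4N)$ with the amplitude block exactly, and checking the constant. The low factor shifts frequencies, so edge effects at the window boundaries must be shown to be negligible by letting the low cutoff be $o(N)$. Equally delicate is ensuring that the almost‑everywhere statements for the countably many subintervals and blocks combine into a single full‑measure set of $s$.
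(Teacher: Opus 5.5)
Your proposal is correct in outline, and for the density it takes a genuinely different route from the paper.

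\textbf{Amplitude part.} This matches the paper: diagonal mean, Parseval in $s$ with a divisor count, Chebyshev and Borel--Cantelli, and the split $\tfrac12|P_Nu|^2+\tfrac12\mathrm{Re}[e^{2i\theta}(P_Nu)^2]$ for the quadratures. There is one slip. The class $m=-n$ has time frequency $n^2-m^2=0$ and about $N$ representations, so the divisor bound does not apply to it. Its contribution is deterministic and of size $O(N^{-2})$, and the paper absorbs it into the mean.

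\textbf{Density part: the paper's route.} The paper stays entirely at the level of second moments in $s$. The high--high remainder is handled by $\mathbb{E}_s\Vert R_N\Vert_2^2\le C/(KN)$ and Markov. The weight $|v|^2$ is Fourier-expanded with $\sum_j|V_j|\le C\log^2N$. This forces equidistribution against the $O(\sqrt N)$ modulated tests $\chi_Ie^{-i\pi jz}$ simultaneously, hence its sharper $N^{-3}\log^3N$ variance bound and a union bound. The cross term $\mathrm{Re}\int_I\bar v^2h^2$ gets its own divisor-bound variance estimate.

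\textbf{Density part: your route.} You use Lemma~\ref{lem:carpet-smoothing} pathwise. The bound $\Vert u_{\ge L}\Vert_\infty\lesssim L^{-1/2+\varepsilon}$ gives $\Vert u_{\ge N}\Vert_2\Vert u_{\ge L}\Vert_\infty=o(N^{-1/2})$ for the high--high part, and it lets you replace $v$ by $u_s$. Continuity of $u_s$ then lets you approximate $w_s$ and $\bar u_s^{2}$ uniformly by step functions on rational intervals. The main term and the cross term therefore follow from the amplitude and quadrature laws on finitely many fixed intervals, which needs only your cruder $N^{-3+\varepsilon}$ bound. This is valid and shorter.

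\textbf{Trade-off.} The cost is that the lemma now rests on the Gauss--Weyl input of the smoothing lemma. The paper's proof does not use it; it needs only $v\to u_s$ in $L^2$. This is harmless downstream, since Theorem~\ref{thm:carpet} invokes both lemmas on a common full-measure set anyway.

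\textbf{The constant.} Make it explicit when you write this out. The high factor $h$ is the amplitude block at scale $2N$, so $N\Vert h\Vert_{L^2(I_j)}^2\to|I_j|/\pi^2$. The identity $|\bar vh+v\bar h|^2=2|v|^2|h|^2+2\mathrm{Re}(\bar v^2h^2)$ supplies the factor $2$ that your ``main term $\int_I|u_s|^2|Pu_s|^2$'' omits, and the two together give $2/\pi^2$.

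\textbf{Edges and the cutoff.} The edge losses are $\Vert v\Vert_\infty\Vert h_{\mathrm{edge}}\Vert_2=O(\sqrt L/N)$. So any dyadic cutoff $L\to\infty$ with $L=o(N)$ suffices.
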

	
	The amplitude statement says the fine structure of the wave is
	spatially homogeneous: every rank interval holds its Lebesgue share
	of every high block, in every quadrature, with the sharp constant
	$2/\pi^{2}$. The density statement is of a different nature: the
	fine-scale energy of the observable is distributed not according to
	Lebesgue measure but according to \emph{the observable itself}. The
	carpet weaves its fractal detail in proportion to its own local
	mass. This self-similarity in mean is proved in
	Appendix~\ref{app:proofs} by a paraproduct decomposition: the
	high block of $|u_{s}|^{2}$ is, to leading order, twice the real part of
	$\bar u_{s}\times(\text{high block of } u_{s})$, and the amplitude
	equidistribution then delivers the local weight $|u_{s}|^{2}$. The
	variance estimates behind both statements reduce to divisor and
	sum-of-two-squares bounds on the quadratic frequency lattice
	$n^{2} - \tilde n^{2}$, and the constant $2/\pi^{2}$ is the frozen
	mass of the dyadic tail of the seed. One consequence of the
	law~\eqref{eq:block-equi-density} deserves its own statement.
	
	\begin{lemma}[No dark intervals at generic times]\label{lem:dark-intervals}
		For almost every $s$, the kernel $w_{s}$ has positive mass on every
		nonempty open subinterval of $(0,1)$. At rational times this fails:
		entire quantile cells can be dark, as in the quartile
		blackout~\eqref{eq:quartile-blackout}.
	\end{lemma}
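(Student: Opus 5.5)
Both claims of the statement come directly from results already in hand. The generic one follows from the self-referential law of Lemma~\ref{lem:block-equi}, played against the amplitude equidistribution. The rational one is read off the quartile blackout~\eqref{eq:quartile-blackout}.

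For generic times, fix a countable family of test intervals, say all intervals $I\subseteq(0,1)$ with rational endpoints. A countable intersection of full-measure sets still has full measure, so there is a single full-measure set of $s$ on which Lemma~\ref{lem:block-equi} holds simultaneously for every such $I$. On this set Lemma~\ref{lem:carpet-smoothing} also holds, so $u_{s}$ and $w_{s}$ are continuous.

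Now fix such an $s$ and suppose, for contradiction, that $w_{s}$ vanishes on some nonempty open interval. Then $\int_{J}w_{s}=0$ on a rational subinterval $J$ of it. Since $w_{s}=|u_{s}|^{2}$ is continuous, this forces $u_{s}\equiv 0$ on $J$. The amplitude law~\eqref{eq:block-equi-amp} then says
\[
N\,\Vert P_{N}u_{s}\Vert_{L^{2}(J)}^{2}\to\frac{2}{\pi^{2}}|J|>0 .
\]
The task is to show this is incompatible with $u_{s}$ vanishing on $J$. Vanishing of the full function does not by itself make a single dyadic block small there, so this needs an argument.

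\textbf{Main obstacle.} I see two routes.

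The first route uses the paraproduct identity behind~\eqref{eq:block-equi-density}. The high block of $w_{s}$ is, to leading order, $2\,\mathrm{Re}\,[\bar u_{s}\,P_{N}u_{s}]$. Its localized energy is therefore governed by $\int_{J}|u_{s}|^{2}=0$, and this is exactly what~\eqref{eq:block-equi-density} says: $N\Vert P_{2N}^{\mathbb{Z}}w_{s}\Vert_{L^{2}(J)}^{2}\to0$. On the other hand, the blocks of $w_{s}$ restricted to a slightly smaller interval $J'\Subset J$ must decay rapidly, because $w_{s}\equiv0$ there. The tails of the frequency projections are localized up to Schwartz-type errors, and since $w_{s}\in C^{1/2-\varepsilon}$ those errors are $O(N^{-1+2\varepsilon})$ after multiplying by $N$.

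This only shows consistency, not a contradiction. So the contradiction has to come from the amplitude side. Write $P_{N}u_{s}=P_{N}(\psi u_{s})+P_{N}((1-\psi)u_{s})$, where $\psi$ is a smooth bump supported in $J$. The first term vanishes identically, since $\psi u_{s}\equiv0$. The second term, restricted to $J'$ (where $\psi\equiv1$), is the block of a function vanishing near $J'$. That restricted block has $L^{2}(J')$ norm $O(N^{-M})$, by the kernel decay of a smoothly truncated dyadic projection.

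This is where I expect friction. $P_{N}$ as defined is a sharp frequency cutoff, with a Dirichlet-type kernel that decays only like $1/(N\,\mathrm{dist})$. The sharp cutoff therefore yields only $N\Vert\cdot\Vert^{2}=O(N\cdot N^{-2}\cdot\Vert u\Vert^{2}_{L^{2}})$, and this still tends to $0$, which is enough. I would first try this crude bound, keeping track of the nonlocal kernel estimate. It gives $N\Vert P_{N}u_{s}\Vert_{L^{2}(J')}^{2}\to0$, contradicting the limit $\tfrac{2}{\pi^{2}}|J'|>0$ from~\eqref{eq:block-equi-amp} applied to a rational $J'$.

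For rational times the failure is explicit. By~\eqref{eq:quartile-blackout}, $w_{1/16}$ vanishes on $(0,\tfrac14)$ and on $(\tfrac34,1)$. So entire quartile cells are dark, and through~\eqref{eq:mosaic-histogram} every law loses its outer quartiles at that time.
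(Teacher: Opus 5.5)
Your route is genuinely different from the paper's and works in outline, but the estimate you commit to for the sharp block is justified incorrectly.

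\textbf{The faulty kernel bound.} The sharp block kernel $K_N(y)=\sum_{n\in B}e^{i\pi ny}$ does not decay like $1/(N\operatorname{dist})$. It is a modulated Dirichlet kernel, of size comparable to $1/|\sin(\pi y)|$ on a positive proportion of each period away from the integers, with no gain in $N$. So the pointwise bound $\int|K_N(x-y)|\,|u_s(y)|\,dy$ only gives $|P_Nu_s|=O(1)$ on $J'$. That yields $N\Vert P_Nu_s\Vert_{L^2(J')}^2=O(N)$, which contradicts nothing.

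\textbf{Two repairs.} Either one closes the argument.
\begin{enumerate}
\item Use the smooth-block version of Lemma~\ref{lem:block-equi}. The appendix explicitly allows it: the limit is multiplied by a positive weight integral, and positivity is all your contradiction needs. The smooth kernel obeys $|K_N(y)|\le C_AN(1+N|y|)^{-A}$, so $P_Nu_s=O(N^{-\infty})$ on $J'$. This is exactly the far-tail term of~\eqref{eq:l1-oscillation}.
\item Keep the sharp block and use a commutator. Take $\eta$ smooth with $\eta=0$ on $J'$ and $\eta=1$ off $J$. Then $u_s=\eta u_s$, hence $P_Nu_s=[P_N,\eta]u_s$ on $J'$. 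The Fourier coefficients of the commutator involve only pairs of frequencies straddling an edge of the block, weighted by $\hat\eta$ of their difference. Near the edges $|\hat u_s(m)|=2/(\pi|m|)=O(N^{-1})$, and far pairs are killed by the decay of $\hat\eta$. This gives $\Vert[P_N,\eta]u_s\Vert_{L^2}^2=O(N^{-2})$.
\end{enumerate}
So your final $O(N^{-1})$ is true, but because the coefficients are frozen at size $N^{-1}$, not because of kernel decay. Continuity from Lemma~\ref{lem:carpet-smoothing} is also unnecessary: zero mass already gives $u_s=0$ a.e.\ on $J$.

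\textbf{Comparison with the paper.} With that repair your argument is a valid alternative. The paper avoids Lemma~\ref{lem:block-equi} entirely. It pairs $u_s$ with a smooth bump $\varphi$ of unit integral on a rational interval and writes $\langle u_s,\varphi\rangle=B(e^{-2\pi is})$, where $B(\zeta)=\sum_n b_n\zeta^{n^2}$ and $\sum|b_n|<\infty$. It notes $B(1)=\int\varphi=1$, and boundary uniqueness for bounded analytic functions then makes the zero set on the circle null. A countable union over rational intervals finishes.

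That route is elementary and independent of the variance and Borel--Cantelli machinery, so the lemma stands on its own as the positivity input to Theorem~\ref{thm:carpet}. Your route makes it a corollary of the amplitude half of Lemma~\ref{lem:block-equi}. This is legitimate, since that proof does not use the present lemma. In exchange you prove more: the same localization shows $u_s$ cannot be locally in $H^{1/2+\delta}$ on any interval, not merely that it cannot vanish there. The cost is that a soft statement now rests on the heaviest lemma of the section.

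The rational-time half, read off the quartile blackout, is the same as the paper's.
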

	
	\begin{proof}
		Fix an interval with rational endpoints and a smooth $\varphi \ge 0$
		supported in it with $\int\varphi = 1$. Pairing the
		series~\eqref{eq:carpet-seed} with $\varphi$ and folding
		$n \mapsto -n$,
		\begin{equation*}
			\begin{gathered}
				\langle u_{s},\varphi\rangle
				= \sum_{n\ \mathrm{odd},\,n\ge1} b_{n}\,e^{-2\pi in^{2}s}
				= B\big(e^{-2\pi is}\big),\\[2pt]
				B(\zeta) := \sum_{n} b_{n}\,\zeta^{\,n^{2}},
			\end{gathered}
		\end{equation*}
		with $\sum|b_{n}| < \infty$ by smoothness of $\varphi$. The
		function $B$ is analytic and bounded on the unit disk and
		continuous on its closure, and $B \not\equiv 0$ because
		$B(1) = \langle u_{0},\varphi\rangle = \int\varphi = 1$. By the
		boundary uniqueness theorem for bounded analytic
		functions~\cite{duren1970}, the zero set of $B$ on the unit circle
		is Lebesgue-null. If $w_{s}$ vanished on the interval then
		$u_{s}$ would vanish there and $\langle u_{s},\varphi\rangle = 0$;
		hence for each rational interval the exceptional set of times is
		null, and a countable union finishes the proof.
	\end{proof}
	
	Everything is now in place for the main theorem of the section.
	
	\begin{theorem}[The universal quantum carpet]\label{thm:carpet}
		For almost every $\tau$, the following holds for every
		$f \in \mathcal{D}_{+}$ that is continuously differentiable and
		positive on the interior of its support, every compact
		nondegenerate interval $J$ in that interior, and every phase
		$\theta$:
		\begin{enumerate}[label=\normalfont(\roman*), leftmargin=*]
			\item the transported amplitude in every quadrature,
			$\Psi_{\theta,\tau} :=
			\mathrm{Re}\big[e^{i\theta}\,u_{\tau}(F)\big]\sqrt f$, has graph of
			upper box (Minkowski) dimension exactly $\tfrac32$ on $J$;
			\item the density $f_{\tau}$ itself has graph of upper box
			dimension exactly $\tfrac32$ on $J$.
		\end{enumerate}
		At every rational value of the rescaled time $s$, by contrast,
		all these graphs are piecewise continuously differentiable, of
		upper box dimension one.
		The value $\tfrac32$ is independent of the law, the interval, and
		the quadrature: the fractal is woven entirely in rank space and
		printed on every law through its quantile chart.
	\end{theorem}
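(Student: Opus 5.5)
The plan is to prove the dimension $\tfrac32$ as a matching pair of bounds: an upper bound from Hölder regularity, and a lower bound from a quantitative local roughness, in the spirit of the Rodnianski-type argument recalled in the introduction. Both bounds are first established in rank space for $\mathrm{Re}[e^{i\theta}u_{s}]$ and $w_{s}$ on compact subintervals of $(0,1)$. They are then transported to $x$-space through the chart $z=F(x)$, which is a $C^{1}$ diffeomorphism with derivative $f$ bounded above and below on the compact interval $J$.

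Fix $s$ in the full-measure set where Lemma~\ref{lem:carpet-smoothing}, both statements of Lemma~\ref{lem:block-equi} (for every rational interval $I$), and Lemma~\ref{lem:dark-intervals} hold. Since $I$ ranges over a countable family and $\theta$ is handled by the uniformity in~\eqref{eq:block-equi-amp}, this set is independent of $f$, $J$ and $\theta$.

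\emph{Upper bound.} By Lemma~\ref{lem:carpet-smoothing}, $u_{s}$ and $w_{s}$ lie in $C^{1/2-\varepsilon}$. On $J$ the factors $F$, $\sqrt f$ and $f$ are $C^{1}$, so $\Psi_{\theta,\tau}$ and $f_{\tau}=w_{s}(F)f$ are $C^{1/2-\varepsilon}$ on $J$. The standard covering estimate shows that a graph of a $C^{\alpha}$ function has upper box dimension at most $2-\alpha$, which gives $\overline{\dim}_{B}\le \tfrac32+\varepsilon$ for every $\varepsilon$.

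\emph{Lower bound.} I will use the converse criterion: if a bounded $h$ on $J$ is not in the Besov space $B^{1/2}_{2,\infty}$, or more simply $\limsup_{N} N\|P_{N}h\|_{L^{2}(J')}^{2}>0$, or more generally if $\sum_{N}N^{2\gamma}\|P_{N}h\|^{2}$ diverges, then the upper box dimension is at least $2-\gamma$. The concrete form I will aim for is that $N^{1/2}$-size oscillation on most dyadic cells of scale $1/N$ forces $\gtrsim N^{3/2}$ covering boxes. This is the square-function form of the Besicovitch--Ursell/Tricot relation $\sum_{\text{cells}}\mathrm{osc}\ge$ local $L^{2}$ block energy times $N^{1/2}$, which I will prove via a localized Bernstein/Littlewood--Paley inequality: $\|P_{N}h\|_{L^{2}(I)}\lesssim N^{-1}\|h\|_{\dot{\mathrm{osc}}}$-type averages plus tails controlled by the Hölder bound.

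For $\mathrm{Re}[e^{i\theta}u_{s}]$ on an interval $I\subset(0,1)$, the second limit in~\eqref{eq:block-equi-amp} gives $N\|P_{N}\cdot\|_{L^{2}(I)}^{2}\to |I|/\pi^{2}>0$. For $w_{s}$, the limit~\eqref{eq:block-equi-density} gives the value $\tfrac{2}{\pi^{2}}\int_{I}w_{s}$, which is positive by Lemma~\ref{lem:dark-intervals}. This is precisely where the dark-interval lemma is needed: without it, a flat piece of the density on $J$ could kill the lower bound.

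\emph{Transport.} The box-counting oscillation sum is stable under composition with a $C^{1}$ diffeomorphism with derivative bounded away from zero. Cells of size $\delta$ map to cells of comparable size, so the count changes by a bounded factor. It is also stable under multiplication by a positive $C^{1}$ factor $\sqrt f$ or $f$: the product rule gives $\mathrm{osc}(ah)\ge (\min a)\,\mathrm{osc}(h)-\delta\|a'\|_{\infty}\|h\|_{\infty}$, and summed over $\delta^{-1}$ cells the error is $O(1)\ll\delta^{-1/2}$. So I will phrase the lower bound directly as $\sum_{\text{cells of }J}\mathrm{osc}\gtrsim\delta^{-1/2}$ along $\delta=1/N$ dyadic. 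That requires converting block energy into oscillation sums, and then transporting.

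The rational-time statement follows from Theorem~\ref{thm:quantile-mosaic}. At $s=p/q$, $u_{s}(F)$ is constant on the finitely many quantile cells, so $\Psi$ and $f_{\tau}$ are piecewise $C^{1}$ (as $\sqrt f$ and $f$ are $C^{1}$) and have dimension one.

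The main obstacle is the conversion step from local $L^{2}$ block energy to a lower bound on the oscillation sum. $P_{N}$ is nonlocal, so the energy on $I$ receives contributions from outside $I$. My plan is to first replace $P_{N}$ by a smooth Littlewood--Paley piece with Schwartz kernel at scale $1/N$; the discrepancy is controlled by the Hölder bound. Then $\|\tilde P_{N}h\|_{L^{2}(I)}^{2}\lesssim N^{-1}\cdot N^{-1}\sum_{\text{cells near }I}\mathrm{osc}^{2}$ plus tails. Combining this with $\mathrm{osc}\lesssim N^{-1/2+\varepsilon}$ from Hölder turns the $\ell^{2}$ bound into $\sum\mathrm{osc}\gtrsim N^{1/2-\varepsilon}$. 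The result is $\overline{\dim}_{B}\ge\tfrac32-\varepsilon$ for all $\varepsilon$, which is enough.
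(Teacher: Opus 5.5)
\textbf{Where your proposal agrees with the paper.} Your architecture is the paper's:
\begin{itemize}
\item the same full-measure set of times;
\item the upper bound from Lemma~\ref{lem:carpet-smoothing} and the box sandwich;
\item the lower bound from block energy, interpolated against the generic-smoothing bound and localized through a rapidly decaying block kernel;
\item Lemma~\ref{lem:dark-intervals} for positivity of the density's limit;
\item the chart transport, and the mosaic theorem at rational times.
\end{itemize}
The one difference is where you interpolate. You work on the oscillation side: you localize in $L^{2}$ and use $\sum\mathrm{osc}^{2}\le\max\mathrm{osc}\cdot\sum\mathrm{osc}$ with $\max\mathrm{osc}\lesssim N^{-1/2+\varepsilon}$. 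The paper works on the block side: it uses $\Vert P_{N}g\Vert_{L^{1}(I')}\ge\Vert P_{N}g\Vert_{L^{2}(I')}^{2}/\Vert P_{N}g\Vert_{\infty}$ and then the $L^{1}$ inequality~\eqref{eq:l1-oscillation}. Either works.

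There are two slips in your version of this step.
\begin{itemize}
\item The $L^{2}$ localization is $\Vert\tilde P_{N}h\Vert_{L^{2}(I')}^{2}\lesssim N^{-1}\sum\mathrm{osc}^{2}+\text{tail}$, obtained by Minkowski's inequality over the chain of shifts. It is not $N^{-2}\sum\mathrm{osc}^{2}$. That version would force $\sum\mathrm{osc}\gtrsim N^{3/2-\varepsilon}$, which exceeds the trivial bound $N\cdot N^{-1/2+\varepsilon}$.
\item The ``converse criterion'' you announce is false on its own. A step function with a jump in $J'$ has $N\Vert P_{N}h\Vert_{L^{2}(J')}^{2}\to c>0$, yet its graph has dimension one. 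Block energy gives $\tfrac32$ only when paired with the H\"older bound, which your last paragraph does supply.
\end{itemize}

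\textbf{The gap: passing from the sharp block to the smooth one.} Lemma~\ref{lem:block-equi}, as stated, gives the lower bound for the sharp block $|n|\in[N,2N)$. You are right that localization needs a rapidly decaying kernel. The sharp kernel decays only like $1/|y|$, and the chain-of-shifts estimate then loses a factor $N\delta_{0}$.

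Your bridge is the claim that the discrepancy between $P_{N}$ and $\tilde P_{N}$ is controlled by the H\"older bound. That claim fails.
\begin{itemize}
\item The difference $(P_{N}-\tilde P_{N})h$ is itself a block of $h$ at frequency scale $N$. So $C^{1/2-\varepsilon}$ gives only $\Vert(P_{N}-\tilde P_{N})h\Vert_{2}\lesssim N^{-1/2+\varepsilon}$, which is no smaller than the $cN^{-1/2}$ you need to preserve.
\item For $u_{s}$, whose coefficients have modulus $2/(\pi|n|)$, the discrepancy has squared mass of exact order $N^{-1}$ as soon as the two multipliers differ on a positive proportion of the block.
\end{itemize}
So no lower bound on $\Vert\tilde P_{N}h\Vert_{L^{2}(I')}$ follows, either for the quadratures or for $w_{s}$.

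\textbf{How to repair it.} The paper's repair is to run Lemma~\ref{lem:block-equi} directly with a smooth block. Every ingredient of that proof uses only bounded multiplier weights:
\begin{itemize}
\item the diagonal mean;
\item the divisor count on $n^{2}-\tilde n^{2}$ in the variance;
\item the positive frequencies $n^{2}+\tilde n^{2}$ of the quadrature term;
\item the paraproduct for the density.
\end{itemize}
Hence the limits persist, multiplied by a positive weight integral, and positivity is all the dimension argument uses.

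Alternatively, you can keep the sharp lemma. Cut off smoothly in space around $I'$, and dominate the sharp block of $\chi h$ by a smooth block globally via Plancherel. The far-field term $P_{N}((1-\chi)h)$ on $I'$ is then $O(N^{-1}\log N)$. Proving this, however, uses the $O(\log k/k)$ decay of the Fourier coefficients of $u_{s}$ and $w_{s}$, not the H\"older bound.
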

	
	The proof is given in Appendix~\ref{app:proofs}. Its architecture
	is a pincer. Generic smoothing caps the H\"older exponent at
	$\tfrac12$ from below, hence the dimension at $\tfrac32$ from
	above. Conserved roughness must then surface somewhere, and the
	equidistribution lemma shows it surfaces \emph{everywhere}. On
	every interval, the block mass of the
	display~\eqref{eq:block-equi-amp} together with the sup
	bound~\eqref{eq:carpet-blocks} forces, by interpolation, a lower
	bound on the local $L^{1}$-oscillation of every quadrature at
	scale $N^{-1}$, and an elementary oscillation calculus converts
	this into box dimension $\tfrac32$ from below. For the density
	the same interpolation runs through the self-referential
	law~\eqref{eq:block-equi-density}, whose limit is positive on
	every interval by Lemma~\ref{lem:dark-intervals}. Finally the quantile chart, a
	bi-Lipschitz change of variable on compacts with a positive
	$C^{1}$ multiplier, moves box dimensions without distortion.
	Throughout the paper it is the upper box dimension that is
	controlled: the methods bound oscillation sums along subsequences
	of scales, which determines the limit superior of the box counts
	but not their limit inferior, so the lower box dimension of the
	graphs is not controlled here beyond the trivial bound of one, and
	the Hausdorff dimension of
	density graphs remains open even for step data, as it does
	throughout this literature.
	
	Figure~\ref{fig:boxcount} carries out this box counting on the
	computed sections of Figure~\ref{fig:mosaics}: at the generic time
	the dyadic counts follow the slope $\tfrac32$ of the theorem, while
	at a rational time the mosaic is piecewise constant and the slope
	drops to one.
	
	\begin{figure}[t]
		\centering
		\includegraphics[width=0.62\textwidth]{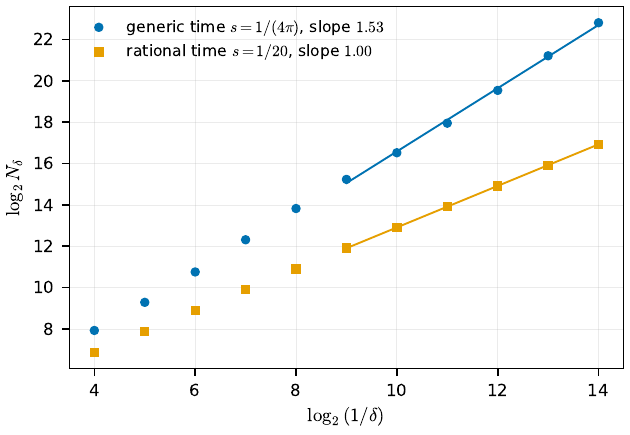}
		\caption{Box-counting regression on sections of the carpet: the
			number $N_{\delta}$ of $\delta$-boxes meeting the graph of $w_{s}$,
			over dyadic scales $\delta = 2^{-4}, \dots, 2^{-14}$, computed from
			the series~\eqref{eq:carpet-seed}. At the generic time
			$s = 1/(4\pi)$ the fitted slope over the five finest octaves is
			$1.530$, matching the dimension $\tfrac32$ of
			Theorem~\ref{thm:carpet}; at the rational time $s = \tfrac{1}{20}$
			the section is a quantile mosaic and the slope is $1.004$.}
		\label{fig:boxcount}
	\end{figure}

	\section{The Strong Law}\label{sec:stronglaw}
	
	The carpet theorem used two properties of the canonical seed: its
	jump, which conserves roughness, and the equidistribution of its
	block energy. This section shows that the first
	property alone suffices, for arbitrary rough seeds and arbitrary
	dispersion, through a mechanism of independent interest in
	mathematical physics: a strong law of large numbers for the critical
	Sobolev mass of the evolved density. Work on the standard torus,
	let $g$ be of bounded variation with jumps $\{J_{j}\}$, let
	$u(x,t) = \sum_{n}\hat g(n)e^{2\pi i(nx - n^{2}t)}$ be its free
	Schr\"odinger evolution, $w = |u|^{2}$ the density, and define the
	critical mass
	\begin{equation}
		Y_{M}(t) := \sum_{m=1}^{M} m\,\big|\hat w(m,t)\big|^{2},
		\label{eq:critical-mass}
	\end{equation}
	a quantity nondecreasing in $M$ and bounded precisely when
	$w(\cdot,t) \in H^{1/2}$. Throughout, the jumps are the
	discontinuities of the periodic extension of $g$. Each $J_{j}$ is
	the difference of the two one-sided limits at a discontinuity,
	which exist everywhere for a function of bounded variation after
	the usual normalization. The endpoint of the period counts once,
	so the square wave has two jumps, each of size two, and square
	summability is automatic, since
	$\sum_{j}|J_{j}| \le V(g) < \infty$. The fractality of the \emph{field} for
	rough data is the theorem of Rodnianski already
	cited~\cite{rodnianski2000}, and the field-level theory has since
	advanced to Schr\"odinger evolutions with
	potentials~\cite{cho2024} and to higher
	dimensions~\cite{erdogan2024}. For the \emph{density}, the
	observable, fractality was known only for step data with jumps at
	rational points, through the Gauss-sum revival arithmetic, and the
	general case was stated as beyond that
	method~\cite{chousionis2015}. Nor does it follow from the
	field-level advances, whose control of $u$ does not transfer to
	$|u|^{2}$. The obstruction is that unitarity controls the field
	and not its modulus, so the conserved roughness could in principle
	hide in the phase and cancel in the intensity.
	
	\begin{theorem}[Strong law for the critical mass]\label{thm:strong-law}
		Let $g$ be real-valued and of bounded variation on the torus with
		at least one jump. Then, with the average taken over a period in
		time,
		\begin{equation}
			\begin{gathered}
				\mathbb{E}_{t}\,Y_{M} = \big(\kappa_{g} + o(1)\big)\log M,
				\qquad
				\operatorname{Var}_{t}(Y_{M}) = O(1),\\[2pt]
				\kappa_{g}
				= \frac{\Vert g\Vert_{L^{2}}^{2}\,\sum_{j}|J_{j}|^{2}}{2\pi^{2}},
			\end{gathered}
			\label{eq:kappa}
		\end{equation}
		and consequently, for almost every $t$,
		\begin{equation}
			\frac{Y_{M}(t)}{\log M}\;\longrightarrow\;\kappa_{g};
			\label{eq:mass-strong-law}
		\end{equation}
		at almost every time the density lies outside $H^{1/2}$, its
		critical mass diverging at the logarithmic rate $\kappa_{g}$,
		independent of the positions and phases of the jumps; moreover,
		for almost every $t$ the density lies in
		$C^{1/2-\varepsilon}$, in no Besov space
		$B^{\sigma}_{1,\infty}$ with $\sigma > \tfrac12$, and its graph has
		upper box dimension exactly $\tfrac32$.
	\end{theorem}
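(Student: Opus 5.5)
The plan is to expand $\hat w(m,t)=\sum_{n}\hat g(n)\overline{\hat g(n-m)}\,e^{-2\pi i(n^{2}-(n-m)^{2})t}$ and note that the time frequency $n^{2}-(n-m)^{2}=m(2n-m)$ is injective in $n$ for fixed $m\ge1$. This is the collision-free property within a mode. The time average is then phase-blind:
\begin{equation*}
\mathbb{E}_{t}|\hat w(m,t)|^{2}=\sum_{n}|\hat g(n)|^{2}|\hat g(n-m)|^{2}.
\end{equation*}
Hence $\mathbb{E}_{t}Y_{M}=\sum_{m\le M}m\sum_{n}|\hat g(n)|^{2}|\hat g(n-m)|^{2}$.

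For the mean asymptotics I would use Wiener's theorem in its quantitative form. For $g\in BV$ one has $|\hat g(n)|\lesssim V(g)/|n|$, and
\begin{equation*}
\frac{1}{M}\sum_{|n|\le M}n^{2}|\hat g(n)|^{2}\to\frac{1}{4\pi^{2}}\sum_{j}|J_{j}|^{2}.
\end{equation*}
By Abel summation, $\sum_{1\le|n|\le M}|n|\,|\hat g(n)|^{2}\sim\frac{1}{4\pi^{2}}\big(\sum_{j}|J_{j}|^{2}\big)\log M$. I then split the double sum according to whether $|n|$ or $|n-m|$ is small, say $\le K$, with $K$ fixed and then sent to infinity.

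When $n$ is bounded, $m|\hat g(n-m)|^{2}\approx|n-m|\,|\hat g(n-m)|^{2}$, so this region contributes $\sum_{|n|\le K}|\hat g(n)|^{2}$ times the logarithmic sum. The same holds for the symmetric region with $n-m$ bounded. The two regions together give $2\Vert g\Vert^{2}_{L^{2}}\cdot\frac{1}{4\pi^{2}}\sum_{j}|J_{j}|^{2}\log M=\kappa_{g}\log M$, up to tails that are $o(1)$ as $K\to\infty$.

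In the remaining region both $|n|,|n-m|>K$, and the bound $|\hat g|\lesssim1/|n|$ gives
\begin{equation*}
\sum_{m\le M}m\sum_{|n|,|n-m|>K}\frac{1}{n^{2}(n-m)^{2}}\lesssim\frac{\log M}{K}+O(1).
\end{equation*}
This is $o(\log M)$ after letting $K\to\infty$. Realness of $g$ is used only to identify $|\hat g(-n)|=|\hat g(n)|$, so the positive and negative frequencies both enter.

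The variance bound is the main obstacle. Write
\begin{equation*}
Y_{M}-\mathbb{E}Y_{M}=\sum_{m}m\sum_{n\ne n'}a_{n,m}\bar a_{n',m}\,e^{-4\pi i m(n-n')t},
\end{equation*}
with $a_{n,m}=\hat g(n)\overline{\hat g(n-m)}$. The $L^{2}_{t}$ norm squared counts pairs $(m,n,n')$ and $(m',k,k')$ with $m(n-n')=m'(k-k')$. I would bound it by Cauchy--Schwarz over the common frequency $h=m(n-n')$, together with a divisor bound on the number of representations $h=m d$. Combined with the decay $|a_{n,m}|\lesssim|n|^{-1}|n-m|^{-1}$, this should make the weighted sum converge uniformly in $M$. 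The dangerous configurations are those where $n$ or $n-m$ stays bounded, since the coefficients are largest there. In these the factor $m$ is compensated by $|n-m|^{-1}|n'-m|^{-1}\approx m^{-2}$, which leaves sums like $\sum_{h}\big(\sum_{m\mid h}m^{-1}\big)^{2}h^{-2}$-type quantities, and these are finite. If a crude divisor bound loses a $\log$, I would first try dyadic localization in $m$, where the $O(1)$ bound per block must then be shown to be summable.

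From bounded variance, Chebyshev along $M_{k}=2^{k^{2}}$ gives $\sum_{k}P\big(|Y_{M_{k}}-\mathbb{E}Y_{M_{k}}|>\epsilon\log M_{k}\big)\lesssim\sum_{k}k^{-4}<\infty$, so Borel--Cantelli yields convergence along $M_{k}$. Monotonicity of $Y_{M}$ in $M$, together with $\log M_{k+1}/\log M_{k}\to1$, fills in the gaps and proves \eqref{eq:mass-strong-law}.

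Divergence of $Y_{M}$ gives $w\notin H^{1/2}$. Since $\sum_{m\sim 2^{j}}m|\hat w|^{2}$ is of order $1$ on average (the increment of $\kappa\log$), infinitely many dyadic blocks satisfy $\Vert P_{2^{j}}w\Vert_{L^{2}}\gtrsim2^{-j/2}$. The upper bound $w\in C^{1/2-\varepsilon}$ for a.e. $t$ is the analogue of Lemma~\ref{lem:carpet-smoothing} for general BV data, via Rodnianski's bound on $u$ and the algebra property. Interpolating $\Vert P w\Vert_{2}^{2}\le\Vert Pw\Vert_{1}\Vert Pw\Vert_{\infty}$ gives $\Vert P_{2^{j}}w\Vert_{L^{1}}\gtrsim2^{-j(1/2+\varepsilon)}$ infinitely often. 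This excludes $B^{\sigma}_{1,\infty}$ for $\sigma>\frac12$, and the oscillation calculus used for Theorem~\ref{thm:carpet} converts it into upper box dimension $\ge\frac32$. The matching upper bound follows from the Hölder regularity.
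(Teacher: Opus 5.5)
Your proposal is sound. Outside the variance step it is the paper's proof:
\begin{itemize}
\item the collision-free Parseval identity for the mean;
\item Wiener's theorem localized at the two ends (you use a fixed cutoff $K\to\infty$, where the paper splits at $m/2$ and uses dominated convergence; the two are equivalent);
\item Chebyshev and Borel--Cantelli along $\exp(k^{2})$-type scales, plus monotonicity;
\item the interpolation $\Vert P_{N}w\Vert_{2}^{2}\le\Vert P_{N}w\Vert_{1}\Vert P_{N}w\Vert_{\infty}$ fed into the oscillation calculus. The paper proves the $C^{1/2-\varepsilon}$ block bound for general BV data by a Stieltjes convolution with the Weyl kernel rather than citing it.
\end{itemize}

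There is one slip in your constants. The two-sided Ces\`aro limit is $\frac{1}{M}\sum_{|n|\le M}n^{2}|\hat g(n)|^{2}\to\frac{1}{2\pi^{2}}\sum_{j}|J_{j}|^{2}$, not $\frac{1}{4\pi^{2}}\sum_{j}|J_{j}|^{2}$. For the square wave, $n^{2}|\hat g(n)|^{2}=4/\pi^{2}$ on odd $n$ and $\sum_{j}|J_{j}|^{2}=8$. The same factor is off in your Abel-summed display. Each end of your localization sees only a one-sided sum: for bounded $n$, $n-m$ runs through negative frequencies. For real $g$ that sum is half the two-sided one, namely $\frac{1}{4\pi^{2}}\sum_{j}|J_{j}|^{2}\log M$. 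This is the value you actually insert, so $\kappa_{g}$ comes out right, but both displays should be stated one-sided.

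The variance is a genuinely different route. The paper splits the cross-mode collisions $m\Delta=m'\Delta'$ into generic, singly resonant and doubly resonant regimes. It proves pointwise autocorrelation bounds in each and counts the doubly resonant band separately. Your Cauchy--Schwarz over the divisors of the common frequency avoids all of that. As written, however, it is only a heuristic; what closes it is the single-mode fourth-moment bound.

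Set $R_{m}(\Delta)=\sum_{n}a_{n,m}\overline{a_{n-\Delta,m}}$, keeping the $n$-sum inside since each pair $(m,\Delta)$ has infinitely many $n$. Let $\sigma_{0}$ be the divisor count. At most $\sigma_{0}(|h|)$ values of $m$ divide $h$, and $\sigma_{0}(m|\Delta|)\le\sigma_{0}(m)\sigma_{0}(|\Delta|)$, so
\begin{equation*}
\operatorname{Var}_{t}(Y_{M})
=\sum_{h\neq0}\Big|\sum_{m\mid h,\ m\le M}m\,R_{m}(h/m)\Big|^{2}
\le\sum_{m\le M}m^{2}\,\sigma_{0}(m)\sum_{\Delta\neq0}\sigma_{0}(|\Delta|)\,|R_{m}(\Delta)|^{2}.
\end{equation*}
Three estimates then finish it:
\begin{itemize}
\item $\sum_{\Delta}|R_{m}(\Delta)|^{2}=\int_{0}^{1}\big|\sum_{n}a_{n,m}e^{2\pi in\theta}\big|^{4}d\theta\le\Vert a_{\cdot,m}\Vert_{\ell^{1}}^{2}\Vert a_{\cdot,m}\Vert_{\ell^{2}}^{2}\lesssim m^{-4}\log^{2}m$;
\item $\sigma_{0}(|\Delta|)\lesssim_{\varepsilon}m^{\varepsilon}$ for $|\Delta|\le m^{2}$;
\item the trivial tail $|R_{m}(\Delta)|\lesssim\Vert a_{\cdot,m}\Vert_{\ell^{1}}\Delta^{-2}$ for $|\Delta|>m^{2}$.
\end{itemize}
Together they give $\operatorname{Var}_{t}(Y_{M})\lesssim_{\varepsilon}\sum_{m}m^{-2+\varepsilon}$, uniformly in $M$. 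The divisor loss is harmless because the weight $m^{2}$ against $m^{-4}$ leaves $m^{-2}$ to spare. So neither your \emph{dangerous configurations} analysis nor dyadic localization is needed.

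What each route buys:
\begin{itemize}
\item The paper's case analysis locates where the covariance lives, which is the information a fluctuation theory would use.
\item Yours gives only the aggregate bound, but it is shorter and more portable, needing only divisor-bounded representation counts. For the Airy flow, for instance, the fluctuation frequencies $3m(n-n')(n+n'-m)$ have at most $2\sigma_{0}(|h|)^{2}$ representations $(m,n,n')$. The same Cauchy--Schwarz, now over all triples, therefore appears to bound the Airy variance the paper leaves conjectural.
\end{itemize}
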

	
	The proof, in Appendix~\ref{app:proofs}, rests on two arithmetic
	pillars and one probabilistic remark. The density coefficients are
	\begin{equation*}
		\hat w(m,t) = \sum_{n}\hat g(n)\,\overline{\hat g(n-m)}\;
		e^{-2\pi im(2n-m)t},
	\end{equation*}
	and within a mode the time frequencies $m(2n-m)$ are
	collision-free, so the time average of $|\hat w(m,t)|^{2}$ is a
	sum of nonnegative terms that no phase arrangement can reduce.
	Wiener's quantitative theorem on the jump content of a function of
	bounded variation~\cite{wiener1924,zygmund2002} then converts the
	weighted sum of these averages into the logarithmic
	law~\eqref{eq:kappa}. Across
	modes, collisions are constrained to the lattice
	$m\Delta = m'\Delta'$ and are sparse in the greatest common
	divisor, so the variance stays bounded while the mean diverges.
	Since $Y_{M}$ is monotone in $M$,
	Chebyshev's inequality along a sparse sequence of scales upgrades
	the mean divergence to the almost-sure
	law~\eqref{eq:mass-strong-law}. A positive-measure set of times
	with bounded critical mass would have to bend a monotone divergent
	mean with bounded mean-square fluctuations, at every scale
	simultaneously, which is impossible.
	The phases keep their freedom at each fixed scale and lose it in
	the aggregate. The average is over the circle of times with
	normalized Lebesgue measure, so the strong law is a deterministic
	second-moment statement in disguise. The density itself, as an
	object of study for rough data, goes back to
	Oskolkov~\cite{oskolkov2006}, who bounded it along rational rays;
	the rational--irrational regularity dichotomy of the field
	originates with Oskolkov~\cite{oskolkov1992} and
	Kapitanski--Rodnianski~\cite{kapitanski1999} before its fractal
	form~\cite{rodnianski2000}. The passage from Sobolev divergence to dimension
	reuses the machinery already in place. The smoothing
	bound~\eqref{eq:carpet-blocks} extends to every seed of bounded
	variation, by writing the block of $u$ as the Stieltjes
	convolution of $dg$ with the block of the canonical Weyl kernel,
	and the oscillation calculus of Lemma~\ref{lem:osc-calculus}
	converts the Besov divergence into the dimension. Figure~\ref{fig:stronglaw}
	illustrates the law for the square-wave datum, whose constant is
	$\kappa_{g} = 4/\pi^{2}$.
	
	\begin{remark}[Complex data]
		Reality of the datum is not essential to
		Theorem~\ref{thm:strong-law}. The collision-freeness of the
		frequencies $m(2n - m)$ within a mode, the Parseval identity
		behind the phase-blind mean, and the two-end localization all
		survive for complex data of bounded variation, and the two ends
		together reconstruct the two-sided Ces\`aro average of Wiener's
		theorem. The theorem therefore holds verbatim, with the same
		constant $\kappa_{g}$, for complex $g$; the only change in the
		proof is that the display splitting the one-sided Wiener sums into
		equal halves is replaced by the sum of the two ends. This widens
		the scope from absorption gratings to the generic case in optics:
		complex transmission functions and phase gratings. Reality is
		essential, by contrast, in the temporal trace of
		Proposition~\ref{prop:dispersion} below, whose mirror family of
		representations at negative spatial frequency requires
		$|\hat g(-n)| = |\hat g(n)|$.
	\end{remark}
	
	\begin{figure}[t]
		\centering
		\includegraphics[width=0.72\textwidth]{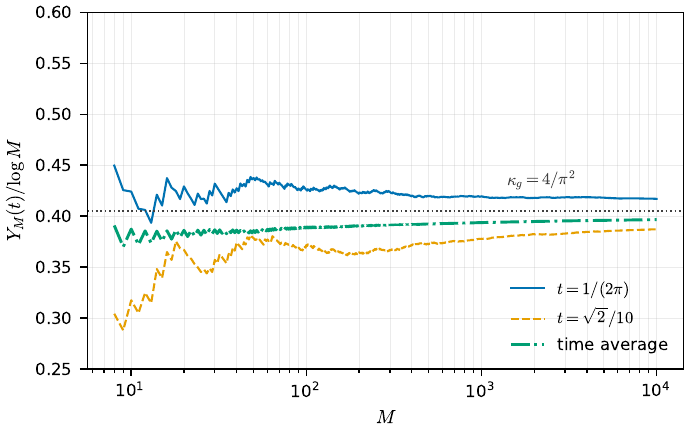}
		\caption{The strong law for the square-wave datum on the torus:
			the normalized critical mass $Y_{M}(t)/\log M$ for two generic
			times and in time average, against the limit
			$\kappa_{g} = 4/\pi^{2} \approx 0.405$ of the
			law~\eqref{eq:mass-strong-law}. The deviation from the limit is of
			order $1/\log M$: the mass itself is $\kappa_{g}\log M + O(1)$ with
			bounded mean-square fluctuations, by the variance bound of the
			theorem, and the division by $\log M$ makes the $O(1)$ term visible
			at this range of $M$.}\label{fig:stronglaw}
	\end{figure}
	
	\begin{corollary}[Universality over seeds and laws]\label{cor:rough-seeds}
		Let the seed $g$ on rank space be of bounded variation with a jump,
		in the sense that its odd extension to the rank circle has a jump,
		and normalized by $\Vert g\Vert_{L^{2}(0,1)} = 1$, so that
		$|e^{-i\tau H_{0}}g|^{2}$ is a curve of kernels by unitarity.
		Then for almost every $\tau$ the deformation
		$\rho_{|e^{-i\tau H_{0}}g|^{2}}[f]$ of any law $f$ whose density is
		continuously differentiable and bounded above and below on its
		support has graph of upper box dimension exactly $\tfrac32$ over
		the whole support, the differentiability being necessary, since a
		rough multiplier $f$ can dominate the oscillation of the product;
		the dimension is global, as in Theorem~\ref{thm:strong-law}, whose
		Besov exclusion is not localized, and it localizes to every
		subinterval for the canonical seed by Theorem~\ref{thm:carpet}.
		For the canonical seed the constant of the
		law~\eqref{eq:mass-strong-law} on the rank circle is
		\begin{equation}
			\kappa_{\mathrm{can}} = \frac{4}{\pi^{2}},
			\label{eq:kappa-canonical}
		\end{equation}
		independent of the law. The intensity carpet of Talbot optics is
		thus fractal, of upper box dimension $\tfrac32$, at almost every
		time for \emph{arbitrary} grating profiles of bounded variation
		with a jump, at arbitrary, possibly irrational, positions. For
		rational step gratings this is the theorem of Chousionis,
		Erdo\u{g}an, and Tzirakis~\cite{chousionis2015}; the general case
		had remained open.
	\end{corollary}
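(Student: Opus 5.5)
We deduce the corollary from Theorem~\ref{thm:strong-law} through the dictionary of Section~\ref{sec:prelim}. The dimension claim follows by transport through the quantile chart. The constant follows by direct evaluation.

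\textbf{Transfer to the rank circle.} Extend $g$ oddly to $\mathbb{T}_{2}$ and call the extension $\tilde g$. The Dirichlet eigenfunction expansion turns $e^{-i\tau H_{0}}g$ into the free evolution of $\tilde g$ on $\mathbb{T}_{2}$, exactly as in~\eqref{eq:carpet-seed}, with $s=\pi\tau/4$. Rescaling $x=z/2$ then carries it to the standard torus, where $\tilde g(2x)$ is real, of bounded variation, and has a jump by hypothesis. The time map $\tau\mapsto s\mapsto t$ is affine, so null sets correspond to null sets. Theorem~\ref{thm:strong-law} therefore applies for almost every $\tau$.

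\textbf{Regularity and dimension of the kernel.} For almost every $\tau$, Theorem~\ref{thm:strong-law} gives three facts about the kernel $w=|e^{-i\tau H_{0}}g|^{2}$ on $(0,1)$:
\begin{itemize}
\item it lies in $C^{1/2-\varepsilon}$ for every $\varepsilon>0$;
\item it lies in no $B^{\sigma}_{1,\infty}$ with $\sigma>\tfrac12$;
\item its graph has upper box dimension $\tfrac32$.
\end{itemize}
Because $w$ is the restriction of a $1$-periodic even function, these properties hold on $[0,1]$ itself.

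\textbf{Pushing forward to the law.} The deformed density is $w(F(x))\,f(x)$. Since $f$ is $C^{1}$ and bounded above and below on the support, $F$ is a $C^{2}$ bi-Lipschitz chart onto $[0,1]$, and $f$ is a positive $C^{1}$ multiplier.

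\emph{Upper bound.} Composition with a bi-Lipschitz map preserves the Hölder class, and multiplying by a $C^{1}$ function preserves it as well. So $w(F)f$ lies in $C^{1/2-\varepsilon}$, which gives upper box dimension at most $\tfrac32$.

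\emph{Lower bound.} We use the oscillation calculus (Lemma~\ref{lem:osc-calculus}). The $L^{1}$-oscillation sums of $w$ at scale $\delta$ satisfy a lower bound, valid along a subsequence of scales, that is equivalent to the Besov exclusion. Under the change of variable these sums change only by constants, since the Jacobian $f$ is bounded above and below. The product rule contributes
\[
\operatorname{osc}(w(F)f)\ \ge\ \min f\cdot\operatorname{osc}(w(F))\;-\;\|w\|_{\infty}\,\|f'\|_{\infty}\,\delta,
\]
and the error term $O(\delta)$ per box is negligible against the $\delta^{1/2+\varepsilon}$-scale oscillation. This yields upper box dimension at least $\tfrac32$.

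The \textbf{main obstacle} is this lower bound. The Besov exclusion controls only a global oscillation sum, so after the change of variables the oscillation must be summed over the whole support, not over a subinterval. This is exactly why the statement is global for general seeds and localizes only for the canonical seed, via Theorem~\ref{thm:carpet}. I would verify carefully that Lemma~\ref{lem:osc-calculus} is stated in a form that is invariant under bi-Lipschitz reparametrization with bounded Jacobian.

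\textbf{The canonical constant.} For $g=\mathbf 1$, the odd extension is the square wave, with $\|\tilde g\|_{L^{2}}^{2}=1$ and two jumps of size $2$. Formula~\eqref{eq:kappa} then gives
\[
\kappa=\frac{1\cdot(4+4)}{2\pi^{2}}=\frac{4}{\pi^{2}}.
\]
This constant is computed for $Y_{M}$ in the frequency normalization of the rank circle. Because of the universality principle~\eqref{eq:universality}, it is a rank-space quantity and does not depend on $f$.
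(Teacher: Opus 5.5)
Your proposal is correct and follows the paper's own argument. You pass to the torus evolution of the odd extension so that Theorem~\ref{thm:strong-law} applies, transport the dimension onto the law through the globally bi-Lipschitz quantile chart (your change-of-variable and product-rule estimate is exactly the content of Lemma~\ref{lem:chart-transport}, which the paper simply cites), and evaluate $\kappa = 1\cdot 8/(2\pi^{2}) = 4/\pi^{2}$ for the square wave.

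Two small repairs are needed. First, in the rank variable $z$ the density of a general seed is even and $2$-periodic but not $1$-periodic, since that needs the odd-mode structure of the canonical seed; evenness and $2$-periodicity still make the graph over the circle two mirror copies of the graph over $[0,1]$, so your transfer to $[0,1]$ stands. Second, the corollary does not state that $g$ is real, so for complex seeds you should invoke the complex-data remark after Theorem~\ref{thm:strong-law} rather than asserting reality.
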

	
	\begin{proposition}[Collision arithmetic of the dispersion]\label{prop:dispersion}
		Let $g$ be real-valued and of bounded variation with a jump.
		\begin{enumerate}[label=\normalfont(\roman*), leftmargin=*]
			\item \textup{(Temporal halving)} For the temporal critical mass
			$Z_{K}(x) := \sum_{k=1}^{K}k^{1/2}\,
			|\hat w^{\,\mathrm{t}}(k;x)|^{2}$ of the time trace at a point,
			one-sided since the density is real and the negative temporal
			frequencies are conjugate, the representations $k = m(2n-m)$ carry
			distinct spatial characters, and, with the average over the circle
			in $x$,
			\begin{equation}
				\mathbb{E}_{x}\,Z_{K}
				= \Big(\frac{\kappa_{g}}{2} + o(1)\Big)\log K.
				\label{eq:temporal-halving}
			\end{equation}
			\item \textup{(Airy doubling)} For the Airy evolution, with
			dispersion $n^{3}$ and a single jump, the within-mode collision
			classes are the pairs $\{n, m-n\}$, their two leading contributions
			are equal and reinforce, and the critical mass
			$Y_{M}^{\mathrm{Airy}}$ of the density obeys
			\begin{equation}
				\mathbb{E}_{t}\,Y_{M}^{\mathrm{Airy}}
				= \big(2\kappa_{g} + o(1)\big)\log M.
				\label{eq:airy-doubling}
			\end{equation}
		\end{enumerate}
	\end{proposition}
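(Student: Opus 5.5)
The plan is to follow the proof of Theorem~\ref{thm:strong-law} and change only the collision arithmetic: which pairs of Fourier modes produce the same frequency, and with what weight. The shared engine has three parts. First, for a fixed mode the time or space average turns $|\text{coefficient}|^{2}$ into a phase-blind sum over collision classes. Second, for large frequency that sum localizes at the two ends, where one of the two spatial indices stays bounded. Third, Wiener's theorem, in the form
\[
\sum_{1\le n\le N} n\,|\hat g(n)|^{2}
= \Big(\tfrac{\sum_j|J_j|^{2}}{4\pi^{2}}+o(1)\Big)\log N
\]
for real $g$, obtained from the Ces\`aro statement by summation by parts, supplies the logarithm. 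The mid-range, where both indices are large, is controlled by $|\hat g(n)|\le V(g)/|n|$ and contributes $O(1)$, exactly as in the strong-law proof.

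For part (i), write
\[
w(x,t)=\sum_{n,n'}\hat g(n)\overline{\hat g(n')}\,e^{2\pi i[(n-n')x-(n^{2}-n'^{2})t]}.
\]
The temporal coefficient at frequency $k$ is then a sum over the representations $k=n^{2}-n'^{2}=m(2n-m)$, with $m=n-n'$, each carrying the spatial character $e^{2\pi imx}$. For fixed $m$ and $k$ the index $n$ is determined, so distinct representations carry distinct characters, and averaging over $x$ gives
\[
\mathbb{E}_x|\hat w^{\mathrm t}(k;x)|^{2}=\sum_{n^{2}-n'^{2}=k}|\hat g(n)|^{2}|\hat g(n')|^{2}.
\]
Summing with weight $k^{1/2}$ over $1\le k\le K$, the dominant region has $n'$ bounded and $|n|\to\infty$, with $k^{1/2}=|n|\,(1+O(n'^{2}/n^{2}))$ and the constraint $|n|\lesssim K^{1/2}$. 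The opposite end, $n$ bounded with $|n'|$ large, gives $k<0$ and is excluded, since those frequencies are conjugate and not counted. So only one end survives, but both signs of $n$ contribute. Here reality, $|\hat g(-n)|=|\hat g(n)|$, supplies the mirror family, and we get
\[
\|g\|_{L^{2}}^{2}\cdot 2\cdot\frac{\sum_j|J_j|^{2}}{4\pi^{2}}\log K^{1/2}
=\frac{\kappa_g}{2}\log K.
\]
The halving is precisely the square root in $K^{1/2}$, the quadratic frequency scaling.

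For part (ii), the Airy frequency of the pair $(n,n-m)$ is
\[
n^{3}-(n-m)^{3}=m\,(3n^{2}-3nm+m^{2}).
\]
Within the mode $m$, two indices $n,n'$ collide if and only if $(n-n')(n+n'-m)=0$. So the collision classes are exactly the pairs $\{n,m-n\}$, with the single fixed point $n=m/2$, whose contribution is negligible. The class coefficient is
\[
\hat g(n)\overline{\hat g(n-m)}+\hat g(m-n)\overline{\hat g(-n)}.
\]
For real $g$ the second term equals the first, since $\hat g(m-n)=\overline{\hat g(n-m)}$ and $\overline{\hat g(-n)}=\hat g(n)$. The time average of a class is therefore $4|\hat g(n)|^{2}|\hat g(n-m)|^{2}$, and with half as many classes as indices this gives
\[
\mathbb{E}_t|\hat w^{\mathrm{Airy}}(m)|^{2}=2\sum_n|\hat g(n)|^{2}|\hat g(n-m)|^{2}.
\]
This is twice the Schr\"odinger mean for each mode. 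Summing $m\,(\cdot)$ up to $M$ and reusing the two-end computation of Theorem~\ref{thm:strong-law} verbatim gives $(2\kappa_g+o(1))\log M$. I would try to check whether the single-jump hypothesis is used only to simplify the end localization; reality seems to be what makes the two leading contributions equal.

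I expect the main obstacle to be error control in part (i). Three things must be shown to be $o(\log K)$:
\begin{itemize}
\item the replacement of $k^{1/2}$ by $|n|$ and of the sharp cutoff $n^{2}-n'^{2}\le K$ by $|n|\le K^{1/2}$;
\item the mid-range where both $|n|,|n'|\gg1$ with $n^{2}-n'^{2}\le K$;
\item the lattice points where $n^{2}-n'^{2}$ is small but both indices are large, namely $n'=\pm(n-1)$ and similar.
\end{itemize}
I would bound these using $|\hat g(n)|^{2}|\hat g(n')|^{2}\lesssim (nn')^{-2}$. This gives mid-range sums of the form $\sum_{n}|n|\,n^{-2}\sum_{|n'|\ge A}n'^{-2}$, which are $O(\log K/A)$. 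Letting $A\to\infty$ slowly then gives $o(\log K)$, matching the $o(1)$ in the statement.
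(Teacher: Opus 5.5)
Your proposal is correct.

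\textbf{Part (i).} This is the paper's argument in different coordinates. The paper parametrizes the representations by $m=n-n'$. It takes the end of $\sigma_m^2$ with $n-m=O(1)$ from $m\approx\sqrt k>0$, and the mirror end from the conjugate family $m<0$, which reduces the count to $\sum_{m\le\sqrt K}m\,\sigma_m^2$. Your cutoff $|n'|<A$, with both signs of $n$ and the $O(\log K/A)$ tail, is the same localization carried out directly in $(n,n')$. It has the advantage of making the dominated-convergence step explicit. It also shows that reality is not actually used in (i), since the inner sum over $n$ is already the two-sided Wiener sum.

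\textbf{Part (ii).} Here your route is genuinely different from the paper's. The paper expands $\hat g(n)=Je^{-2\pi in\theta}/(2\pi in)+\varrho(n)$ around a single jump and matches the leading terms of the two class members. It then disposes of the cross and remainder terms by Cauchy--Schwarz against Wiener's Ces\`aro averages. You instead note that for real $g$ the two members of each class $\{n,m-n\}$ coincide exactly:
$\hat g(m-n)\overline{\hat g(-n)}=\hat g(n)\overline{\hat g(n-m)}$, reflecting that the Airy flow preserves reality. This gives the identity $\mathbb{E}_t|\hat w^{\mathrm{Airy}}(m)|^2=2\sigma_m^2-\mathbf{1}_{2\mid m}|\hat g(m/2)|^4$, and hence $\mathbb{E}_tY_M^{\mathrm{Airy}}=2\,\mathbb{E}_tY_M+O(1)$, with no error analysis at all.

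Your identity buys two things:
\begin{itemize}
\item The single-jump hypothesis drops out entirely, which settles the question you raised.
\item The bounded-index ends are handled exactly. At those ends $\hat g(n)$ is a low-frequency coefficient that the jump term does not approximate, and Wiener's Ces\`aro smallness of $\varrho$ says nothing about bounded frequencies. So the remainder there is in general of order $\log M$, not $o(\log M)$: it is exactly what replaces the norm of the sawtooth by $\Vert g\Vert_2^2$ in $\kappa_g$.
\end{itemize}
The paper's sketch therefore has to be read with the full coefficient kept at that end. There, equality of the two class members is again just $\hat g(n)=\overline{\hat g(-n)}$, which your identity uses from the start.
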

	
	\begin{remark}[Almost-sure upgrades]
		We expect both means to upgrade to almost-sure laws, exactly as in
		Theorem~\ref{thm:strong-law}: the temporal
		mass~\eqref{eq:temporal-halving} should satisfy
		$Z_{K}(x)/\log K \to \kappa_{g}/2$ for almost every $x$, placing
		the time trace of the density outside $H^{1/4}$, and the Airy
		mass~\eqref{eq:airy-doubling} should satisfy
		$Y_{M}^{\mathrm{Airy}}(t)/\log M \to 2\kappa_{g}$ for almost every
		$t$. What is missing is in each case a cross-mode variance bound.
		For the trace, the covariance of the masses at temporal
		frequencies $k$ and $k'$ is supported on quadruples of
		representations sharing a spatial frequency, of which there are at
		most $d(k)\,d(k') = O((kk')^{\varepsilon})$ by the divisor bound on
		the representation counts of $k = m(2n - m)$; for the Airy flow,
		completing the square in the within-mode frequency gives
		\begin{equation*}
			4\,Q_{m}(n) = m\,\big(3\Delta^{2} + m^{2}\big),
			\qquad \Delta = 2n - m,
		\end{equation*}
		so cross-mode collisions lie on the conic
		$3m\Delta^{2} - 3m'\Delta'^{2} = m'^{3} - m^{3}$, whose integer
		points should be divisor-bounded in the real quadratic order of
		discriminant $12mm'$. We have not carried out these combinatorial
		estimates in full and state the almost-sure forms as conjectures.
		What is unconditional, beyond the means, is the transfer of
		regularity from field to density along traces: wherever the field
		trace obeys the temporal $C^{1/4-\varepsilon}$ block bound, as it
		does for data of bounded variation~\cite{erdogan2019}, the density
		trace inherits the same exponent, by the elementary inequality
		\begin{equation*}
			\big||u(x,t)|^{2} - |u(x,s)|^{2}\big|
			\;\le\;
			\big(|u(x,t)| + |u(x,s)|\big)\,\big|u(x,t) - u(x,s)\big|
		\end{equation*}
		with the field trace bounded, so the conjectured temporal law
		would place the trace graphs at upper box dimension exactly
		$\tfrac74$, by the same pincer as in Theorem~\ref{thm:strong-law}.
	\end{remark}
	
	\begin{remark}
		Three readings of the constant deserve record. First, it resolves
		the seed-classification question raised by the carpet theorem.
		Every seed of bounded variation with a jump weaves a fractal
		density at almost every time, while for jump-free seeds of bounded
		variation the constant vanishes, $\kappa_{g} = 0$, and the strong
		law asserts no divergence: within the bounded-variation class the
		jump is exactly the threshold for this mechanism, though roughness
		from sources other than jumps is not excluded. Second, the
		constant is a spectrometer. By the law~\eqref{eq:mass-strong-law}
		a single snapshot of the intensity at almost any time determines
		$\Vert g\Vert^{2}\sum_{j}|J_{j}|^{2}$, so the fractal remembers
		the quantitative jump content of a datum it has long since
		dephased; Wiener's statistic, a Fourier-side invariant from 1924,
		reappears as the divergence rate of an observable of quantum
		dynamics. Third, the constant computes the collision arithmetic of
		the dispersion relation: singleton classes give $\kappa_{g}$ for
		the Schr\"odinger flow, quadratic frequency scaling halves it
		along time traces, and the paired classes of the Airy flow double
		it. On product laws the carpets multiply. For tensor seeds the
		density factorizes, and the joint carpet
		$w_{1}(F_{1})\,w_{2}(F_{2})\,f$ of a product law has graph of
		upper box dimension $\tfrac52$ at almost every time, by fiberwise
		box-counting over sections where the second factor is bounded
		below. What replaces this for dependent laws is obstructed by
		infinite hyperplane collision classes and joins the open problems
		of the conclusion.
	\end{remark}
	
	\begin{remark}[Transported Talbot physics]
		Rank space is a quantum box that every probability law carries,
		and the carpet flow is its Talbot dynamics.
		Theorem~\ref{thm:quantile-mosaic} is the fractional Talbot
		revival and Theorem~\ref{thm:carpet} the fractal Talbot profile,
		for the square-wave datum of Berry's quantum
		carpets~\cite{berry1996,berryklein1996,berry2001}, transported by
		the dilation onto arbitrary laws. For real and imaginary parts of
		Schr\"odinger evolutions of bounded-variation data, almost-sure
		dimension $\tfrac32$ is a theorem of
		Rodnianski~\cite{rodnianski2000}, with refinements and nonlinear
		extensions by Erdo\u{g}an and Tzirakis~\cite{erdogan2016}; part
		(i) adapts that skeleton, with the quadrature-uniform and
		localized forms supplied by the equidistribution lemma. For the
		squared modulus, the observable, such statements were previously
		available only for step data with rational
		jumps~\cite{chousionis2015}, the subsequent
		advances~\cite{cho2024,erdogan2024} again concerning the field
		rather than its modulus. Here the gap is closed twice over: by the
		self-referential energy law~\eqref{eq:block-equi-density} for the
		canonical seed, localized to every interval, and by the strong law
		of Theorem~\ref{thm:strong-law} for arbitrary rough seeds, with
		the sharp constant of the display~\eqref{eq:kappa}. The dimension
		$\tfrac32$ itself is the unitary shadow of a boundary phenomenon.
		The seed's Fourier coefficients diverge in $H^{1/2}$ exactly
		because the uniform kernel jumps at the ends of rank space; on the
		unitary axis this endpoint roughness reappears as H\"older
		exponent $\tfrac12$, pinned between conservation and smoothing.
	\end{remark}
	
	\begin{remark}[Experimental reading]
		The prediction is open to measurement, and the two forms of the
		law are not equally robust. A box-count slope needs several
		decades of scale. Even for the exact density sampled at $2^{20}$
		points, the octave-to-octave slopes at a single generic time range
		over $[1.288,\, 1.672]$ around $\tfrac32$; a measured carpet
		offers fewer usable octaves, the detector pitch, the source
		coherence, and the finite ratio of grating period to wavelength
		all truncating the cascade. The spectral form of the law is more
		robust. By Theorem~\ref{thm:strong-law} the critical mass follows
		$Y_{M} = \kappa_{g}\log M + O(1)$ with bounded mean-square
		fluctuations, so fitting the logarithmic growth of the spectral
		mass of a single measured intensity profile, at almost any time,
		returns $\Vert g\Vert^{2}\sum_{j}|J_{j}|^{2}$ directly. Optical
		Talbot carpets have been measured at resolutions approaching what
		this requires~\cite{case2009}; the effect and its applications
		are reviewed in~\cite{wen2013}; matter-wave Talbot and
		Talbot--Lau interferometry supply the same observable for material
		gratings.
	\end{remark}
	

	\section{Conclusion}\label{sec:conclusion}
	
	For the free Schr\"odinger evolution of rough periodic data, the
	fractality of the observable now matches the generality of the
	fractality of the field. For arbitrary real data of bounded
	variation with a jump, the density weaves, at almost every time, a
	fractal of upper box dimension exactly three halves, its critical
	mass diverging at the rate set by Wiener's jump statistic. The
	mechanism is arithmetic, and productive. The same accounting
	halves the constant along time traces and doubles it for the Airy
	flow, so the critical mass of the density is an exactly measurable
	probe of the dispersion relation. Transported by the rank
	calculus, the theory runs on the rank space of every absolutely
	continuous law: quantile mosaics at rational times, a universal
	carpet at almost every other, and an information ledger that
	breathes without producing entropy.
	
	Three problems remain open. Multidimensional data pose hyperplane
	collision classes, for which the phase-blind mean has no present
	replacement. The nonlinear Schr\"odinger density poses the same
	question amid resonant interactions. And the joint carpet of
	dependent laws, beyond the product case of dimension five halves,
	is unresolved. Two refinements lie within reach of the method
	itself. The bounded variance invites a fluctuation theory: the
	distributional limit of the critical mass about its logarithmic
	mean over a random time. The halving and doubling of the constant
	invite a general collision index, one computable factor per
	dispersion relation. The law also invites an experimental
	test, most robustly in its spectral form: the logarithmic growth
	rate of the critical mass of a single measured intensity profile
	returns the jump content of the grating, a more robust observable
	than a fitted box-count slope.
	
	\backmatter
	
	\bmhead{Data availability}
	
	No new data were produced in this work.
	
	\begin{appendices}
		
		\section{Proofs}\label{app:proofs}
		
		This appendix states formally, and proves in detail, the results used
		in the body of the paper whose proofs are routine or computational.

		\begin{lemma}[Oscillation calculus]\label{lem:osc-calculus}
			Let $g$ be continuous and real on an interval, $\delta = 2^{-j}$,
			and, for $\delta \le |I|$, let $\Sigma_{g}(\delta, I)$ denote the
			sum of the oscillations of $g$ over the intersections with $I$ of
			the $\delta$-grid cells meeting the interval $I$.
			Then the number $N_{\delta}(I)$ of $\delta$-boxes meeting the graph
			of $g|_{I}$ satisfies
			\begin{equation}
				\frac{1}{\delta}\,\Sigma_{g}(\delta, I)
				\;\le\; N_{\delta}(I)
				\;\le\; 2\Big(\frac{|I|}{\delta} + 2\Big)
				+ \frac{1}{\delta}\,\Sigma_{g}(\delta, I),
				\label{eq:box-sandwich}
			\end{equation}
			so a bound $g \in C^{\alpha}$ gives upper box dimension at most
			$2 - \alpha$, while
			$\Sigma_{g}(\delta_{i}, I) \ge c\,\delta_{i}^{\,\sigma - 1}$ along
			any sequence $\delta_{i} \to 0$ gives dimension at least
			$2 - \sigma$. Moreover, for $2$-periodic $g$, dyadic $M$, and
			concentric intervals $I' \subset I$ with
			$\operatorname{dist}(I', \complement I) \ge \delta_{0} > 0$,
			\begin{equation}
				\big\Vert P_{M}\,g\big\Vert_{L^{1}(I')}
				\;\le\; C\,\delta\,\Sigma_{g}(\delta, I)
				+ C_{A}\,\Vert g\Vert_{\infty}\,(M\delta_{0})^{1-A},
				\qquad \delta = 1/M,
				\label{eq:l1-oscillation}
			\end{equation}
			for every $A$, where $P_{M}$ is any smooth dyadic block at
			frequency scale $M$, in the amplitude or the density convention.
		\end{lemma}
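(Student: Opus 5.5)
The plan has two independent parts: the box-count sandwich~\eqref{eq:box-sandwich}, which is elementary column counting, and the $L^{1}$ bound~\eqref{eq:l1-oscillation}, which uses only the cancellation of the block against piecewise constants.

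\emph{Sandwich.} Fix the $\delta$-grid and let $C_{1},\dots,C_{K}$ be the grid cells meeting $I$, so $K \le |I|/\delta + 2$. On the column $(C_{k}\cap I)\times\mathbb{R}$ the graph is connected, since $g$ is continuous. Its vertical extent is therefore the closed interval of length $\operatorname{osc}_{C_{k}\cap I} g$. An interval of length $L$ meets at least $L/\delta$ and at most $L/\delta + 2$ consecutive $\delta$-boxes. Summing the lower count over $k$ gives $\delta^{-1}\Sigma_{g}(\delta,I) \le N_{\delta}(I)$. Summing the upper count, and allowing a factor $2$ for boxes sharing a grid line at cell boundaries, gives the upper bound in~\eqref{eq:box-sandwich}.

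The two dimension consequences then follow directly:
\begin{itemize}
\item If $g \in C^{\alpha}$, each oscillation is at most $C\delta^{\alpha}$, so $\Sigma_{g} \lesssim \delta^{\alpha-1}$, hence $N_{\delta} \lesssim \delta^{\alpha-2}$ and $\overline{\dim}_{B} \le 2-\alpha$.
\item If $\Sigma_{g}(\delta_{i},I) \ge c\,\delta_{i}^{\sigma-1}$, then $N_{\delta_{i}} \ge c\,\delta_{i}^{\sigma-2}$. Passing between dyadic and general $\delta$ costs only constants, and the limsup along $\delta_{i}$ gives $\overline{\dim}_{B} \ge 2-\sigma$.
\end{itemize}

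\emph{$L^{1}$ bound.} Write $P_{M} g = \psi_{M} * g$, where $\psi_{M}(y) = M\psi(My)$ is the periodized kernel of the smooth block. Its symbol vanishes near zero frequency, so $\int\psi_{M} = 0$, and it has Schwartz decay $|\psi_{M}(y)| \le C_{A} M (1+M|y|)^{-A}$. Fix a cutoff $\chi$ equal to $1$ on $\{|y| \le \delta_{0}/2\}$ and split $\psi_{M} = \chi\psi_{M} + (1-\chi)\psi_{M}$.
\begin{itemize}
\item \emph{Far part.} Its total $L^{1}$ mass is $\lesssim C_{A}(M\delta_{0})^{1-A}$. Hence its contribution to $\Vert P_{M}g\Vert_{L^{1}(I')}$ is at most $C_{A}\Vert g\Vert_{\infty}(M\delta_{0})^{1-A}$; the measure of $I'$ is absorbed by periodicity, since $|I'|\le 2$.
\item \emph{Near part.} For $x \in I'$ the near part only samples $g$ on $I$, because $\operatorname{dist}(I',\complement I) \ge \delta_{0}$. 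Let $\bar g$ be the piecewise constant function equal to $g(x_{k})$ on $C_{k}\cap I$, for some $x_{k}$ in the cell. Then $|g - \bar g| \le \operatorname{osc}_{C_{k}\cap I} g$ there, and
\[
\chi\psi_{M} * g = \chi\psi_{M} * (g - \bar g) + \chi\psi_{M} * \bar g .
\]
\end{itemize}

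The first of these two terms has $L^{1}(I')$ norm at most $\Vert\psi_{M}\Vert_{L^{1}}\,\Vert g - \bar g\Vert_{L^{1}(I)} \le C\,\delta\,\Sigma_{g}(\delta,I)$, using $\Vert\psi_{M}\Vert_{L^{1}} = O(1)$.

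The main obstacle is the second term, $\chi\psi_{M} * \bar g$. Here I would not estimate $\bar g$ directly; instead I would sum by parts across the cells. Mean zero and Schwartz decay give a primitive $\Psi_{M}$ of $\chi\psi_{M}$ with $\Vert\Psi_{M}\Vert_{L^{1}} \lesssim M^{-1}$, up to an $O((M\delta_{0})^{-A})$ error from the cutoff, which goes into the tail term. Then
\[
\chi\psi_{M} * \bar g = \Psi_{M} * d\bar g ,
\]
where $d\bar g$ is a sum of point masses at the cell boundaries, with jumps $|g(x_{k+1}) - g(x_{k})|$. Each such jump is at most the sum of the two adjacent oscillations, since $g$ is continuous across the boundary. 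This yields
\[
\Vert \Psi_{M} * d\bar g\Vert_{L^{1}} \lesssim M^{-1}\,\Sigma_{g}(\delta,I) = \delta\,\Sigma_{g}(\delta,I).
\]

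The only remaining care is the two boundary cells of $I$. Truncating $\bar g$ there produces edge jumps of size up to $\Vert g\Vert_{\infty}$, but these are handled by the $\delta_{0}$ separation. Their influence on $I'$ is again $\lesssim \Vert g\Vert_{\infty}(M\delta_{0})^{-A}$ by kernel decay, so it folds into the tail term of~\eqref{eq:l1-oscillation}.

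The whole argument is convention-independent: only $\int\psi_{M}=0$ and the scaling $\psi_{M}(y) = M\psi(My)$ are used, so it applies equally in the amplitude convention and in the density convention.
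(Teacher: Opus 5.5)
Your proposal is correct. The sandwich and its two dimension consequences are the paper's column count. The $2$ in $2(|I|/\delta+2)$ is just the $+2$ per column, summed over at most $|I|/\delta+2$ columns, so no further doubling is needed.

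For~\eqref{eq:l1-oscillation} you take a genuinely different route. The paper never approximates $g$. It writes $P_{M}g(x)=\int K_{M}(y)\,(g(x-y)-g(x))\,dy$ and bounds the local $L^{1}$ modulus of continuity directly:
\begin{itemize}
\item a shift by $|y|\le\delta$ costs $2\delta\,\Sigma_{g}(\delta,I)$ in $L^{1}(I')$;
\item a shift by $\delta\le|y|\le\delta_{0}$ is a chain of at most $2|y|/\delta$ such shifts inside $I$;
\item the first moment $\int|y|\,|K_{M}(y)|\,dy\le CM^{-1}$ supplies the gain.
\end{itemize}
The tail $|y|>\delta_{0}$ is split off sharply, with no primitive and no edge bookkeeping.

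You instead make an $L^{1}$--$BV$ ($K$-functional) decomposition. $\Sigma_{g}(\delta,I)$ controls both $\Vert g-\bar g\Vert_{L^{1}(I)}\le\delta\,\Sigma_{g}$ and the variation of the step function $\bar g$, and the primitive of the mean-zero kernel gains $M^{-1}$ on that variation. The two gains are the same quantity at heart, since a mean-zero kernel satisfies $\Vert\Psi\Vert_{L^{1}}\le\int|y|\,|\psi(y)|\,dy$. The paper's route is shorter and gives constants independent of $\delta_{0}$ for free. Yours isolates the structural fact that any local operator bounded on $L^{1}$ and gaining $M^{-1}$ on variation obeys the same estimate.

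The one step that needs a line more than you give it is the cutoff. The mean $c_{0}:=\int\chi\psi_{M}$ is $O((M\delta_{0})^{1-A})$, not zero. So the primitive equals $c_{0}$ to the right of its support, and it multiplies every jump of $\bar g$ lying more than $\delta_{0}$ to the left of $x$, not only the edge jumps. Bounding these crudely by $|c_{0}|\,\Sigma_{g}(\delta,I)$ costs a factor up to $M$, and hence a $\delta_{0}$-dependent constant. Instead, note that they telescope to $c_{0}\,\bar g(x-\delta_{0})$. This bounds them by $|c_{0}|\,\Vert g\Vert_{\infty}$, which folds into the tail term as you intend.
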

		
		\begin{proof}
			Over a grid cell the graph meets at least
			$\operatorname{osc}/\delta$ and at most
			$2 + \operatorname{osc}/\delta$ vertical boxes, which is the
			sandwich~\eqref{eq:box-sandwich}, a standard box-counting
			argument~\cite{falconer2014}, and the two dimension statements
			follow by taking logarithms. For the
			inequality~\eqref{eq:l1-oscillation}, realize the block as a
			convolution $P_{M}g = g * K_{M}$ with a kernel of integral zero and
			rapid decay, $|K_{M}(y)| \le C_{A}M(1 + M|y|)^{-A}$. Then
			\begin{equation*}
				|P_{M}g(x)| \le \int |K_{M}(y)|\,\big|g(x-y) - g(x)\big|\,dy,
			\end{equation*}
			and integrating over $x \in I'$: a shift by $|y| \le \delta$
			changes $g$ by at most the oscillations of the two grid cells
			involved, so
			$\Vert g(\cdot - y) - g\Vert_{L^{1}(I')} \le
			2\delta\,\Sigma_{g}(\delta, I)$; a shift by
			$\delta \le |y| \le \delta_{0}$ is a chain of at most
			$2|y|/\delta$ such steps inside $I$, so the corresponding integral
			is at most
			$4\,\Sigma_{g}(\delta, I)\int |y|\,|K_{M}(y)|\,dy
			\le C\delta\,\Sigma_{g}(\delta, I)$; and the far tail
			$|y| > \delta_{0}$ contributes at most
			$2\Vert g\Vert_{\infty}\int_{|y|>\delta_{0}}|K_{M}|
			\le C_{A}\Vert g\Vert_{\infty}(M\delta_{0})^{1-A}$.
		\end{proof}
		
		\begin{lemma}[Chart transport]\label{lem:chart-transport}
			Let $f \in \mathcal{D}_{+}$ be continuously differentiable and
			positive on the interior of its support, let $J$ be a compact
			interval there, and let $g$ be continuous near $F(J)$. For any
			positive $h \in C^{1}(J)$, the graphs of $g|_{F(J)}$ and of
			$(g\circ F)\,h|_{J}$ have the same upper box dimension whenever
			either exceeds one.
		\end{lemma}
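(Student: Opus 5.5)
The plan is to split the passage from $g|_{F(J)}$ to $(g\circ F)\,h|_{J}$ into two moves: the bi-Lipschitz change of variable $x\mapsto F(x)$, and multiplication by the positive $C^{1}$ weight $h$. Each move should change the box count at scale $\delta$ by at most a constant factor plus an additive $O(1/\delta)$. Throughout, the oscillation sandwich~\eqref{eq:box-sandwich} of Lemma~\ref{lem:osc-calculus} is the only counting tool. It reduces upper box dimension to the growth of $\delta^{-1}\Sigma(\delta,\cdot)$ modulo $O(1/\delta)$, and this additive term is harmless exactly when the dimension exceeds one. That explains the hypothesis "whenever either exceeds one".

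\textbf{Step 1 (chart).} On the compact $J$, $F$ is $C^{1}$ with $F' = f \in [c,C]$, $c>0$, so $F: J\to F(J)$ is bi-Lipschitz. A $\delta$-grid cell of $J$ maps into an interval of length at most $C\delta$. That interval meets at most $C+2$ cells of the $\delta$-grid on $F(J)$, and conversely with $c^{-1}$. Oscillations are invariant under reparametrization, so
\[
\Sigma_{g\circ F}(\delta,J)\le C_{1}\,\Sigma_{g}(\delta,F(J)),
\qquad
\Sigma_{g}(\delta,F(J))\le C_{1}\,\Sigma_{g\circ F}(\delta,J),
\]
using that the oscillation over a union of adjacent cells is at most the sum of the oscillations over the cells. (Alternatively, compare with a $C\delta$-grid and use that box dimension is unchanged under fixed rescaling of the grid.)

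\textbf{Step 2 (multiplier).} Put $G=g\circ F$, which is continuous, hence bounded by some $B$ on $J$. On a cell $Q$ of length $\delta$,
\[
\operatorname{osc}_{Q}(Gh)\le \|h\|_{\infty}\operatorname{osc}_{Q}G + B\,\|h'\|_{\infty}\,\delta,
\qquad
\operatorname{osc}_{Q}G\le \|1/h\|_{\infty}\operatorname{osc}_{Q}(Gh)+\|Gh\|_{\infty}\,\|(1/h)'\|_{\infty}\,\delta .
\]
The second inequality applies the first to $G=(Gh)\cdot(1/h)$, with $1/h\in C^{1}(J)$ by positivity and compactness. Summing over the $O(|J|/\delta)$ cells shows that the two oscillation sums agree up to constant factors and an additive $O(1)$. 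Hence $\delta^{-1}\Sigma$ agrees up to constant factors and an additive $O(1/\delta)$.

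\textbf{Step 3 (conclusion).} Combining Steps 1 and 2 with~\eqref{eq:box-sandwich} gives
\[
N_{\delta}\big((g\circ F)h,\,J\big)\le C\,N_{\delta}\big(g,\,F(J)\big)+C/\delta
\]
and the symmetric inequality. Suppose the upper box dimension $d$ of one graph exceeds one. Then along a sequence $\delta_i\to0$ its box count is at least $\delta_i^{-d+\epsilon}$, and this dominates $1/\delta_i$. So the other count is at least a constant times $\delta_i^{-d+\epsilon}$, giving dimension at least $d$. The reverse bound follows from the symmetric inequality, so the two upper box dimensions coincide.

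The main obstacle is only bookkeeping in Step 1: the grid cells do not correspond exactly under $F$, so the count has to go through overlapping cells. The bounded-overlap argument above should settle it.
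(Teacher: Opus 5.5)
Your proposal is correct and follows essentially the same route as the paper: the bi-Lipschitz chart changes the oscillation sums in the sandwich~\eqref{eq:box-sandwich} only by bounded factors (the paper compares at the scales $c\delta$ and $C\delta$, you use bounded overlap at the same scale $\delta$), and the positive $C^{1}$ multiplier changes them by a bounded factor plus $O(1)$, which box counts of order $\delta^{-\beta}$ with $\beta>1$ do not feel. Your use of $1/h$ for the reverse multiplier inequality is the paper's ``symmetrically above'' made explicit.
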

		
		\begin{proof}
			On $J$ the map $F$ is a bi-Lipschitz $C^{1}$ diffeomorphism onto
			$F(J)$, so grid oscillations of $g\circ F$ at scale $\delta$ are
			sandwiched between oscillation sums of $g$ at the comparable scales
			$c\delta$ and $C\delta$, changing the counts of the
			sandwich~\eqref{eq:box-sandwich} by bounded factors only. For the
			multiplier, with $h \ge c > 0$ and
			$\operatorname{osc}_{I}(h) \le C\delta$,
			\begin{equation*}
				\operatorname{osc}_{I}\big((g\circ F)h\big)
				\;\ge\; c\,\operatorname{osc}_{I}(g\circ F)
				- \Vert g\Vert_{\infty}\,C\delta,
			\end{equation*}
			and symmetrically above, so the oscillation sums differ by at most
			a bounded factor plus $O(1)$, which box counts of order
			$\delta^{-\beta}$ with $\beta > 1$ do not feel.
		\end{proof}
		
		\renewcommand{\restatename}{Lemma~\ref{lem:carpet-smoothing}}%
		\begin{restatement}[Generic smoothing]
			For almost every $s$: $u_{s}$ is continuous, and for every
			$\varepsilon > 0$ there is $C_{s,\varepsilon}$ with
			\begin{equation*}
				\Vert P_{N}u_{s}\Vert_{\infty}
				\le C_{s,\varepsilon}\,N^{-1/2+\varepsilon}
				\quad\text{for all dyadic } N,
			\end{equation*}
			so that $u_{s}$ and $w_{s}$ belong to
			$C^{1/2-\varepsilon}(\mathbb{T}_{2})$ for every $\varepsilon > 0$.
			At every rational $s$ the conclusion fails: $u_{s}$ is a step
			function.
		\end{restatement}
		
		\begin{proof}
			Fix $\varepsilon > 0$. For a.e.\ $s$ the following holds for all
			large dyadic $N$: there is a reduced fraction $a/q$ with
			$|s - a/q| \le 1/(qN)$ and $N^{1-\varepsilon} \le q \le N$. Indeed
			Dirichlet's theorem always provides $q \le N$, while the set of
			$s$ admitting some $q \le N^{1-\varepsilon}$ has measure at most
			$\sum_{q \le N^{1-\varepsilon}} q \cdot 2/(qN)
			\le 2N^{-\varepsilon}$, which is summable along dyadic $N$, so by
			the Borel--Cantelli lemma small denominators eventually never
			occur. On this event the Gauss--Weyl
			estimate~\cite{montgomery1994} gives, uniformly in the linear
			phase $x$,
			\begin{equation*}
				\sup_{x}\Big|\sum_{l=1}^{L}
				e^{2\pi i(l^{2}s' + lx)}\Big|
				\;\le\; C\Big(\frac{L}{\sqrt q} + \sqrt{q\log q}\Big)
				\;\le\; C\,L^{1/2 + \varepsilon}
				\qquad (L \asymp N,\ \text{the only range used}),
			\end{equation*}
			for $s'$ in the same approximation class; the substitution
			$n = 2l+1$ writes the odd-frequency block of the
			series~\eqref{eq:carpet-seed} as such a sum with time parameter
			$4s$, and the map $s \mapsto 4s$ modulo one preserves null sets.
			Abel summation against the monotone coefficients
			$2/(\pi n) \asymp N^{-1}$ on the block then yields the
			bound~\eqref{eq:carpet-blocks}. Summing the blocks shows $u_{s}$
			is a uniform limit of continuous functions; the geometric block
			decay is the H\"older--Besov characterization of
			$C^{1/2-\varepsilon}$~\cite{katznelson2004}, and
			$w_{s} = |u_{s}|^{2}$ is a product of bounded
			$C^{1/2-\varepsilon}$ functions. At rational $s$ the
			revival~\eqref{eq:carpet-revival} is a step function.
		\end{proof}
		
		\renewcommand{\restatename}{Lemma~\ref{lem:block-equi}}%
		\begin{restatement}[Equidistribution of block energy]
			Let $I \subseteq (0,1)$ be an interval. For almost every $s$, along
			dyadic $N \to \infty$:
			\begin{equation*}
				\begin{gathered}
					N\,\big\Vert P_{N}u_{s}\big\Vert_{L^{2}(I)}^{2}
					\;\longrightarrow\; \frac{2}{\pi^{2}}\,|I|,\\[2pt]
					N\int_{I}\big(\mathrm{Re}\,[e^{i\theta}P_{N}u_{s}]\big)^{2}dz
					\;\longrightarrow\; \frac{1}{\pi^{2}}\,|I|
					\ \text{ uniformly in }\theta,
				\end{gathered}
			\end{equation*}
			while the blocks of the density obey the self-referential law
			\begin{equation*}
				N\,\big\Vert P_{2N}^{\mathbb{Z}}\,w_{s}\big\Vert_{L^{2}(I)}^{2}
				\;\longrightarrow\; \frac{2}{\pi^{2}}\int_{I}w_{s}(z)\,dz,
			\end{equation*}
			where $P_{2N}^{\mathbb{Z}}$ retains the spatial frequencies
			$|k| \in [2N, 4N)$ of $w_{s}$, in the frequency convention of the
			expansion~\eqref{eq:carpet-seed}: products of odd modes populate
			the even frequencies of $\mathbb{T}_{2}$, so the density is
			$1$-periodic and its spectrum lives on the even sublattice, the
			integer lattice of the standard torus, in contrast to the odd
			amplitude frequencies retained by $P_{N}$.
		\end{restatement}
		
		\begin{proof}
			Throughout, $P_{N}$ may be taken as the sharp block or as a
			smooth dyadic block. Every estimate below uses only the
			boundedness of the multiplier weights, and for a smooth block the
			limiting constants are multiplied by the positive weight integral,
			which is all the dimension argument uses. The localized
			oscillation inequality~\eqref{eq:l1-oscillation} is applied to
			the smooth version. It suffices to prove each statement for intervals with
			rational endpoints and every tolerance
			$\varepsilon = 1/\ell$, the general case following by approximation
			from inside and outside, and the almost-sure sets being intersected
			over the countable family at the end. With the coefficients
			$c_{n}$ of the seed~\eqref{eq:carpet-seed} and a bounded test
			function $\varphi$, write
			\begin{equation}
				X_{N}^{\varphi}(s)
				:= \big\langle \varphi,\ |P_{N}u_{s}|^{2}\big\rangle
				= \sum_{n, \tilde n}
				c_{n}\overline{c_{\tilde n}}\,
				e^{-2\pi i(n^{2} - \tilde n^{2})s}\,
				\overline{\hat\varphi}(n - \tilde n),
				\label{eq:test-expansion}
			\end{equation}
			the sums over odd frequencies in the block, with
			$\hat\varphi$ the coefficients on $\mathbb{T}_{2}$.
			
			First, the mean. Averaging over the period retains
			$\tilde n = \pm n$: the diagonal gives
			\begin{equation*}
				\Big(\sum_{\mathrm{block}}|c_{n}|^{2}\Big)\int\varphi\,dz
				= \frac{2}{\pi^{2}N}\,\big(1 + O(N^{-1})\big)\int\varphi\,dz,
			\end{equation*}
			by the block mass of the seed coefficients, and the antidiagonal
			$\tilde n = -n$ contributes at most
			$\sum_{\mathrm{block}}(4/\pi^{2}n^{2})\,|\hat\varphi(2n)|
			= O(N^{-2})$ for $\varphi$ of bounded variation.
			
			Second, the variance. The fluctuating part of the
			expansion~\eqref{eq:test-expansion} is supported on the time
			frequencies $\mu = n^{2} - \tilde n^{2} = k(n + \tilde n)$ with
			$k = n - \tilde n \neq 0$, and for fixed $\mu$ and $k \mid \mu$ the
			pair $(n, \tilde n)$ is determined. Hence, by Parseval in $s$,
			\begin{equation*}
				\operatorname{Var}_{s} X_{N}^{\varphi}
				\le \frac{C}{N^{4}}\sum_{\mu \neq 0}
				\Big(\sum_{k \mid \mu}|\hat\varphi(k)|\Big)^{2}
				= \frac{C}{N^{4}}\sum_{k, k'}
				|\hat\varphi(k)||\hat\varphi(k')|\,
				\#\{\mu\},
			\end{equation*}
			where $\#\{\mu\}$ counts the frequencies divisible by both $k$ and
			$k'$ that the block realizes; since the count is symmetric in $k$
			and $k'$, the larger denominator may be used, giving
			\begin{equation*}
				\#\{\mu\} \;\le\; \frac{2N\gcd(k,k')}{\max(|k|,|k'|)} + 1.
			\end{equation*}
			The test functions used below are
			$\varphi = \chi_{I}e^{i\pi jz}$ with $|j| \le 2\sqrt N$, whose
			coefficients obey
			$|\hat\varphi(k)| \le \min(1, C/\langle k - j\rangle)$. The
			resulting bound is uniform in $j$: the equal-frequency terms
			$k = k'$ carry $\gcd = |k|$ and contribute
			$CN^{-4}\cdot 2N\sum_{k}\langle k - j\rangle^{-2} \le CN^{-3}$,
			while for $k \neq k'$ the divisor $\gcd(k, k') \le |k - k'|$ and
			the shifted peaks give, after summing the harmonic tails, at most
			$CN^{-3}\log^{3}N$ regardless of the location of $j$; hence
			\begin{equation}
				\operatorname{Var}_{s} X_{N}^{\varphi}
				\;\le\; C\,N^{-3}\log^{3} N
				\qquad\text{uniformly in } |j| \le 2\sqrt N.
				\label{eq:test-variance}
			\end{equation}
			
			Third, the quadrature term. With
			$Y_{N}(s) := \int_{I}(P_{N}u_{s})^{2}\,dz
			= \sum c_{n}c_{\tilde n}e^{-2\pi i(n^{2} + \tilde n^{2})s}
			\hat\chi_{I}$-type coefficients, every time frequency
			$n^{2} + \tilde n^{2}$ is positive, so $\mathbb{E}_{s}Y_{N} = 0$;
			the number of representations of $\mu$ as an ordered sum of two odd
			squares is at most a divisor function, $O(\mu^{\varepsilon})$, so
			the same Parseval argument gives
			$\operatorname{Var}_{s}Y_{N} \le C_{\varepsilon}N^{-3+\varepsilon}$.
			
			Fourth, the assembly. Fix $\varepsilon > 0$. By the
			uniform bound~\eqref{eq:test-variance}, which applies verbatim to
			the frequency-window blocks used in the fifth step, and by
			Chebyshev's inequality,
			the probability that some test in the family
			$\{\chi_{I}e^{i\pi jz} : |j| \le 2\sqrt N\}$, evaluated on the
			dyadic block or on the window block of the fifth step, deviates
			from its mean
			by more than $\varepsilon N^{-1}\log^{-2}N$, or that
			$|Y_{N}| > \varepsilon N^{-1}$, is at most
			$4\sqrt N\cdot C\varepsilon^{-2}N^{-1}\log^{7}N
			= C_{\varepsilon}N^{-1/2}\log^{7}N$, summable along dyadic $N$; by
			the Borel--Cantelli lemma, almost surely all these deviations fail
			for all large dyadic $N$. The case $j = 0$ gives the first
			limit of the display~\eqref{eq:block-equi-amp}; adding
			$\tfrac12\operatorname{Re}\,[e^{2i\theta}Y_{N}]$ to
			$\tfrac12 X_{N}^{\chi_{I}}$ gives the quadrature limit, uniformly
			in $\theta$ since $|{\operatorname{Re}\,e^{2i\theta}Y_{N}}| \le |Y_{N}|$.
			
			Fifth, the density and the self-referential law. Set
			$K := \lceil\sqrt N\,\rceil$, $v := P_{<K}u_{s}$, and let $h$ be
			the part of $u_{s}$ with frequencies of absolute value in the
			window $[2N - K,\, 4N + K)$. Splitting each pair
			$(n,\, n - k)$ with $|k| \in [2N, 4N)$ according to which index is
			small, the block of the density decomposes as
			\begin{equation}
				P_{2N}^{\mathbb{Z}}\,w_{s}
				= \Pi_{N}\big(\bar v\,h + v\,\bar h\big) + R_{N},
				\label{eq:paraproduct}
			\end{equation}
			where $\Pi_{N}$ restricts to the spatial window and the remainder
			$R_{N}$ collects the pairs with both indices at least $K$. The
			remainder is negligible: its time frequencies $k(2n - k)$ are
			distinct over $n$ for fixed $k$, so
			\begin{equation*}
				\mathbb{E}_{s}\Vert R_{N}\Vert_{L^{2}(0,1)}^{2}
				= \sum_{k}\sum_{n}|c_{n}|^{2}|c_{n-k}|^{2}
				\,\mathbf{1}\{\text{both} \ge K\}
				\;\le\; \frac{C}{KN},
			\end{equation*}
			and Markov's inequality with $K = \sqrt N$ makes
			$\Vert R_{N}\Vert_{2}^{2} \le \varepsilon N^{-1}$ eventually,
			almost surely, along dyadic $N$. For the main term, expand
			$|\Pi_{N}(\bar vh + v\bar h)|^{2}
			= 2|v|^{2}|h|^{2} + 2\operatorname{Re}\,[\bar v^{2}h^{2}]$ up to
			window-edge corrections of deterministic $L^{2}$ mass
			$O(K\log^{2}K/N^{2})$, still $o(N^{-1})$ at $K = \sqrt N$. Writing
			$|v|^{2} = \sum_{|j| < 2K}V_{j}e^{i\pi jz}$ with
			$\sum_{j}|V_{j}| \le \big(\sum_{|n|<K}|c_{n}|\big)^{2}
			\le C\log^{2}N$,
			the first term integrates against $\chi_{I}$ to
			\begin{equation*}
				2\sum_{j}V_{j}\,
				\big\langle \chi_{I}e^{-i\pi jz},\ |h|^{2}\big\rangle
				\;=\; \frac{2}{\pi^{2}N}\int_{I}|v|^{2}\,dz
				\;+\; o(N^{-1}),
			\end{equation*}
			by the already-established equidistribution of the window blocks
			of $u$ against the test family, with the error uniform over the
			$O(\sqrt N)$ tests by the union bound above; the window mass
			replacing the dyadic block mass is
			\begin{equation*}
				\sum_{\substack{|n|\,\in\,[2N-K,\,4N+K)\\ n\ \mathrm{odd}}}
				|c_{n}|^{2}
				= \frac{8}{\pi^{2}}
				\sum_{\substack{2N-K\,\le\, n\,<\,4N+K\\ n\ \mathrm{odd}}}
				\frac{1}{n^{2}}
				= \frac{1}{\pi^{2}N}\,\big(1 + o(1)\big)
			\end{equation*}
			at $K = \lceil\sqrt N\,\rceil$, half the mass $2/(\pi^{2}N)$ of a
			dyadic block per unit of the normalization. Since
			$v \to u_{s}$ in $L^{2}$, the integral
			$\int_{I}|v|^{2}$ converges to $\int_{I}w_{s}$. The oscillatory
			term $2\operatorname{Re}\int_{I}\bar v^{2}h^{2}$ has zero mean, its
			time frequencies being sums of two squares of window size shifted
			by $O(K^{2})$, and the divisor bound gives variance
			$O(N^{-3+\varepsilon}\log^{4}N)$; Chebyshev and Borel--Cantelli
			finish as before. Collecting the pieces and letting
			$\varepsilon \downarrow 0$ along a sequence proves the
			law~\eqref{eq:block-equi-density}.
		\end{proof}
		
		\renewcommand{\restatename}{Theorem~\ref{thm:carpet}}%
		\begin{restatement}[The universal quantum carpet]
			For almost every $\tau$, the following holds for every
			$f \in \mathcal{D}_{+}$ that is continuously differentiable and
			positive on the interior of its support, every compact
			nondegenerate interval $J$ in that interior, and every phase
			$\theta$:
			\begin{enumerate}[label=\normalfont(\roman*), leftmargin=*]
				\item the transported amplitude in every quadrature,
				$\Psi_{\theta,\tau} :=
				\mathrm{Re}\big[e^{i\theta}\,u_{\tau}(F)\big]\sqrt f$, has graph of
				upper box (Minkowski) dimension exactly $\tfrac32$ on $J$;
				\item the density $f_{\tau}$ itself has graph of upper box
				dimension exactly $\tfrac32$ on $J$.
			\end{enumerate}
			At every rational value of the rescaled time $s$, by contrast,
			all these graphs are piecewise continuously differentiable, of
			upper box dimension one.
			The value $\tfrac32$ is independent of the law, the interval, and
			the quadrature: the fractal is woven entirely in rank space and
			printed on every law through its quantile chart.
		\end{restatement}
		
		\begin{proof}
			Fix the full-measure set of times on which
			Lemma~\ref{lem:carpet-smoothing}, Lemma~\ref{lem:block-equi} for
			every rational interval, and Lemma~\ref{lem:dark-intervals} all
			hold, and fix such a time $s$, a law $f$ in the stated class, a
			compact nondegenerate $J$, and a phase $\theta$.
			
			Upper bounds. By Lemma~\ref{lem:carpet-smoothing} the functions
			$g_{\theta} := \operatorname{Re}\,[e^{i\theta}u_{s}]$ and $w_{s}$
			are $C^{1/2 - \varepsilon}$; composition with the bi-Lipschitz
			chart and multiplication by a $C^{1}$ factor preserve the class on
			$J$, so by Lemma~\ref{lem:osc-calculus} both graphs have dimension
			at most $\tfrac32 + \varepsilon$ for every $\varepsilon$.
			
			Lower bounds. Choose rational intervals
			$I' \subset\subset I \subset\subset \operatorname{int}F(J)$, with
			$I'$ concentric in $I$. For the amplitude, the quadrature limit of
			the display~\eqref{eq:block-equi-amp} on $I'$ and the sup
			bound~\eqref{eq:carpet-blocks} give, for all large dyadic $N$,
			\begin{equation*}
				\big\Vert P_{N}g_{\theta}\big\Vert_{L^{1}(I')}
				\;\ge\;
				\frac{\int_{I'}(P_{N}g_{\theta})^{2}}
				{\Vert P_{N}g_{\theta}\Vert_{\infty}}
				\;\ge\;
				\frac{c_{0}\,|I'|/N}
				{C_{s,\varepsilon}N^{-1/2+\varepsilon}}
				\;\ge\; c\,N^{-1/2 - \varepsilon},
			\end{equation*}
			with $c_{0} > 0$ the quadrature-limit constant of the smooth block,
			of which only positivity is used; the localized block
			inequality~\eqref{eq:l1-oscillation}, read from right to left with
			$\delta = 1/N$, forces
			$\Sigma_{g_{\theta}}(\delta, I) \ge c\,\delta^{-1/2 + 2\varepsilon}$
			for all small dyadic $\delta$; by
			Lemma~\ref{lem:osc-calculus} the graph of $g_{\theta}$ on $I$ has
			dimension at least $\tfrac32 - 2\varepsilon$. For the density the
			same chain runs through the self-referential
			law~\eqref{eq:block-equi-density}, whose limit
			$(2/\pi^{2})\int_{I'}w_{s}\,dz$ is strictly positive by
			Lemma~\ref{lem:dark-intervals}, giving
			$\Sigma_{w_{s}}(\delta, I) \ge c_{I}(s)\,\delta^{-1/2+2\varepsilon}$
			and dimension at least $\tfrac32 - 2\varepsilon$ for the graph of
			$w_{s}$ on $I$.
			
			Transport. On $J$ the transported amplitude is
			$(g_{\theta}\circ F)\sqrt f$ and the density is
			$(w_{s}\circ F)\,f$; the quantile image of the compact
			$\overline{Q(I)} \subset J$ is $\overline I$, so
			Lemma~\ref{lem:chart-transport} carries both lower bounds, and the
			upper bounds, through the chart. Letting
			$\varepsilon \downarrow 0$ along a sequence gives dimension exactly
			$\tfrac32$ in parts (i) and (ii). The rational-time statement is
			immediate from the mosaic theorem: at rational $s$ the kernel is a
			quantile histogram, so the transported graphs are piecewise
			$C^{1}$ over the cell partition, of upper box dimension one.
		\end{proof}
		
		\renewcommand{\restatename}{Theorem~\ref{thm:strong-law}}%
		\begin{restatement}[Strong law for the critical mass]
			Let $g$ be real-valued and of bounded variation on the torus with
			at least one jump. Then, with the average taken over a period in
			time,
			\begin{equation*}
				\begin{gathered}
					\mathbb{E}_{t}\,Y_{M} = \big(\kappa_{g} + o(1)\big)\log M,
					\qquad
					\operatorname{Var}_{t}(Y_{M}) = O(1),\\[2pt]
					\kappa_{g}
					= \frac{\Vert g\Vert_{L^{2}}^{2}\,\sum_{j}|J_{j}|^{2}}{2\pi^{2}},
				\end{gathered}
			\end{equation*}
			and consequently, for almost every $t$,
			\begin{equation*}
				\frac{Y_{M}(t)}{\log M}\;\longrightarrow\;\kappa_{g};
			\end{equation*}
			at almost every time the density lies outside $H^{1/2}$, its
			critical mass diverging at the logarithmic rate $\kappa_{g}$,
			independent of the positions and phases of the jumps; moreover,
			for almost every $t$ the density lies in
			$C^{1/2-\varepsilon}$, in no Besov space
			$B^{\sigma}_{1,\infty}$ with $\sigma > \tfrac12$, and its graph has
			upper box dimension exactly $\tfrac32$.
		\end{restatement}
		
		\begin{proof}
			Write $y_{n} := \hat g(n)\overline{\hat g(n-m)}$, so that
			$\hat w(m,t) = \sum_{n}y_{n}e^{-2\pi im(2n-m)t}$, with
			$|\hat g(n)| \le C/\langle n\rangle$ for bounded variation. The
			frequencies $m(2n-m)$ are strictly increasing in $n$ for
			$m \ge 1$, so Parseval in time gives the exact, phase-blind mean
			\begin{equation*}
				\sigma_{m}^{2} := \mathbb{E}_{t}\big|\hat w(m,t)\big|^{2}
				= \sum_{n}|\hat g(n)|^{2}\,|\hat g(n-m)|^{2}.
			\end{equation*}
			The mean of the mass~\eqref{eq:critical-mass} is
			$\sum_{m\le M}m\,\sigma_{m}^{2}$, and as $m$ grows the sum
			defining $\sigma_{m}^{2}$ localizes at its two ends. Split the
			$n$-sum into the ranges $2|n| \le m$ and $2|n - m| \le m$, which
			overlap in at most the single term $2n = m$, of size $O(m^{-4})$,
			and the remainder: on the first end $|\hat g(n-m)|^{2}$ is of size
			$m^{-2}$ and $|\hat g(n)|^{2}$ carries the Wiener content, on the
			second end the roles are exchanged, and the remainder, where both
			$|n|$ and $|n - m|$ exceed $m/2$, contributes $O(m^{-3})$ per mode
			and $O(1)$ in total. Since
			$\widehat{dg}(n) = 2\pi in\,\hat g(n)$, Wiener's
			theorem~\cite{wiener1924,zygmund2002} gives the Ces\`aro asymptotics
			\begin{equation*}
				\frac{1}{2N}\sum_{|n|\le N}\big|2\pi n\,\hat g(n)\big|^{2}
				\;\longrightarrow\;\sum_{j}|J_{j}|^{2},
			\end{equation*}
			and by Abel summation against the weight $1/k$,
			\begin{equation*}
				\sum_{1 \le k \le M}k\,|\hat g(-k)|^{2}
				+ \sum_{1 \le k \le M}k\,|\hat g(k)|^{2}
				= \frac{\sum_{j}|J_{j}|^{2}}{2\pi^{2}}\,\log M\,(1 + o(1)),
			\end{equation*}
			each of the two sums carrying half, since $g$ is real and
			$|\hat g(-k)| = |\hat g(k)|$. The insertion of these asymptotics
			into the two ends involves an exchange of summations that we
			display rather than imply, because the unlocalized inner sums
			$\sum_{m \le M}m\,|\hat g(n-m)|^{2}$ are \emph{not} uniformly
			$O(\log M)$ in $n$: near $m = n$ they can be of order
			$\langle n\rangle$, and it is the end localization that removes
			this growth. On the first end the exchange reads
			\begin{equation}
				\begin{gathered}
					\sum_{m\le M}\,m\!\!\sum_{2|n| \le m}\!\!
					|\hat g(n)|^{2}\,|\hat g(n-m)|^{2}
					= \sum_{n}|\hat g(n)|^{2}\;T_{M}(n),\\[2pt]
					T_{M}(n) := \!\!\!\sum_{\substack{m \le M\\ m \ge 2|n| \vee 1}}
					\!\!\! m\,|\hat g(n-m)|^{2},
				\end{gathered}
				\label{eq:mean-exchange}
			\end{equation}
			and the substitution $k = m - n$, which keeps $k \ge |n|$
			throughout the range of $T_{M}(n)$, exactly so at $n \ge 0$ and
			with room to spare for $n < 0$, where the range begins at $3|n|$,
			gives
			\begin{equation*}
				T_{M}(n)
				= \sum_{k}k\,|\hat g(-k)|^{2} + n\!\sum_{k \ge |n|}\!|\hat g(-k)|^{2}
				= \sum_{1 \le k \le M}k\,|\hat g(-k)|^{2}
				+ O\big(\log\langle n\rangle + 1\big),
			\end{equation*}
			where the middle term is $O(1)$ by
			$\langle n\rangle\sum_{k \ge |n|}Ck^{-2} \le C$ and the endpoint
			corrections of the $k$-range cost harmonic tails, of size
			$O(\log\langle n\rangle)$ at the bottom and $O(1)$ at the top. In
			particular $T_{M}(n) \le C\log M$ uniformly on the admissible range
			$2|n| \le m \le M$, while at each fixed $n$ the Abel-summed Wiener
			asymptotics give
			$T_{M}(n)/\log M \to \sum_{j}|J_{j}|^{2}/4\pi^{2}$; dominated
			convergence in $n$, against the summable weights
			$|\hat g(n)|^{2}$ with $\sum_{n}|\hat g(n)|^{2}
			= \Vert g\Vert_{2}^{2}$, therefore yields
			\begin{equation*}
				\sum_{n}|\hat g(n)|^{2}\,T_{M}(n)
				= \frac{\Vert g\Vert_{2}^{2}\sum_{j}|J_{j}|^{2}}{4\pi^{2}}
				\,\log M + o(\log M).
			\end{equation*}
			The second end, parametrized by $j = n - m$ with $2|j| \le m$, is
			symmetric, its inner sums running over the coefficients at positive
			frequencies, and contributes the same amount. Hence
			\begin{equation*}
				\sum_{m \le M}m\,\sigma_{m}^{2}
				= 2\cdot\frac{\Vert g\Vert_{2}^{2}\sum_{j}|J_{j}|^{2}}{4\pi^{2}}
				\,\log M + o(\log M)
				= \big(\kappa_{g} + o(1)\big)\log M,
			\end{equation*}
			the factor two counting the two ends; the error is $o(\log M)$ and
			not better in general, Wiener's theorem carrying no rate for
			countably many jumps or a continuous singular part, and it improves
			to $O(1)$ for finitely many jumps with an absolutely continuous
			remainder. This is the mean of
			the display~\eqref{eq:kappa}, and positivity of every term is what
			makes it impervious to the jump positions and to the phases.
			
			For the variance, expand the fluctuation
			$|\hat w(m,t)|^{2} - \sigma_{m}^{2}
			= \sum_{\Delta\neq0}R_{m}(\Delta)\,e^{-4\pi im\Delta t}$ with
			$R_{m}(\Delta) = \sum_{n}y_{n}\bar y_{n-\Delta}$, the
			autocorrelation of the coefficient sequence. Three bounds on the
			autocorrelation are needed, all from the localization of the sum at
			the four points $0, m, \Delta, \Delta + m$ where a factor peaks.
			Generically,
			\begin{equation*}
				|R_{m}(\Delta)| \;\le\;
				\frac{C\log^{2}(2 + |m\Delta|)}
				{\langle m\rangle\langle\Delta\rangle
					\langle\max(|m|,|\Delta|)\rangle}
				\qquad
				\big(\,\big||\Delta| - m\big| > \tfrac{m}{2}\,\big);
			\end{equation*}
			in the resonant windows two of the four points merge, and
			localizing at $n = m + j$ gives the refined bound
			\begin{equation*}
				\big|R_{m}(\pm m + r)\big|
				\;\le\; \frac{C}{m^{2}}\sum_{j}
				\frac{1}{\langle j\rangle\langle j - r\rangle}
				\;\le\; \frac{C\log(2 + |r|)}{m^{2}\,\langle r\rangle},
			\end{equation*}
			which decays in the offset $r$; and for the single-mode variance
			the sharpest route is through the fourth power of the trigonometric
			polynomial $Y_{m}(\theta) = \sum_{n}y_{n}e^{2\pi in\theta}$,
			\begin{equation*}
				\operatorname{Var}\big(|\hat w(m)|^{2}\big)
				= \sum_{\Delta \neq 0}|R_{m}(\Delta)|^{2}
				\le \int_{0}^{1}|Y_{m}|^{4}
				\le \Vert y\Vert_{1}^{2}\,\Vert y\Vert_{2}^{2}
				\le \frac{C\log^{2}m}{m^{2}}\,\sigma_{m}^{2}
				\le \frac{C\log^{2}m}{m^{4}},
			\end{equation*}
			by $\Vert y\Vert_{1} \le C\log m/m$. Now decompose the variance,
			\begin{equation*}
				\operatorname{Var}_{t}(Y_{M})
				= \sum_{m,m'\le M}mm'\,\operatorname{Cov}
				\big(|\hat w(m)|^{2},\, |\hat w(m')|^{2}\big),
			\end{equation*}
			into the diagonal, the separated pairs, and the near-diagonal band.
			The diagonal is
			\begin{equation*}
				\sum_{m\le M}m^{2}\operatorname{Var}\big(|\hat w(m)|^{2}\big)
				\le C\sum_{m}\frac{\log^{2}m}{m^{2}} = O(1).
			\end{equation*}
			For $m \neq m'$ the
			covariance is supported on the collisions
			$m\Delta = m'\Delta'$: writing $d = \gcd(m, m')$, $m = d\mu$,
			$m' = d\mu'$ with $\mu, \mu'$ coprime, they are
			$\Delta = \mu'\tau$, $\Delta' = \mu\tau$, $\tau \neq 0$. If
			neither factor is resonant, the generic bound gives, per pair,
			$\operatorname{Cov} \le C\log^{4}M\,d^{2}/(m^{3}m'^{3})$ after
			summing the geometric $\tau$-series, and
			\begin{equation*}
				\sum_{m\le M}\sum_{m'\le M}mm'\cdot
				\frac{d^{2}}{m^{3}m'^{3}}
				\;\le\;
				\sum_{m\le M}\frac{1}{m^{2}}\sum_{d \mid m}
				d^{2}\sum_{\mu'}\frac{1}{(d\mu')^{2}}
				\;\le\; C\sum_{m\le M}\frac{\sigma_{0}(m)}{m^{2}}
				= O(1),
			\end{equation*}
			with $\sigma_{0}$ the divisor count, so the separated and mixed
			regimes contribute $O(\log^{4}M)\cdot O(1)$-summable terms, in
			fact $O(1)$ after distributing the logarithms into the convergent
			sums. If one factor is resonant, say $\Delta = \mu'\tau$ within
			distance $m/2$ of $\pm m$, the window contains $O(m/\mu')$ values
			of $\tau$, and the refined resonant bound applies with offset
			$r = \mu'\tau \mp m$, whose $\langle r\rangle$-decay summed over
			the window costs only a logarithm; if the partner is not resonant
			its generic bound carries the factor $1/(m'\Delta'^{2})$-type
			decay and the double sum converges as above. The only delicate
			case is
			the doubly resonant one, $\Delta$ near $\pm m$ and $\Delta'$ near
			$\pm m'$ simultaneously. Writing $e = m' - m$, $r = \Delta \mp m$
			and $r' = \Delta' \mp m'$, the collision $m\Delta = m'\Delta'$
			reads $m'r' - mr = \pm(m^{2} - m'^{2}) = \mp e\,(m + m')$, which
			first forces $|e| \le C(\langle r\rangle + \langle r'\rangle)$ and
			then determines the partner offset exactly,
			\begin{equation*}
				r' = \frac{m\,r \mp e\,(m + m')}{m'},
			\end{equation*}
			a single value pinned near $r \mp 2e$ for each admissible triple
			$(m, e, r)$; as $e$ varies at fixed $(m, r)$ these values are
			distinct up to bounded multiplicity, the derivative of $r'$ in $e$
			lying in $\pm[\tfrac32, \tfrac52]$ on the band, so
			$\sum_{e}\langle r'\rangle^{-1} \le C\log m$. Inserting the
			refined resonant bound for both factors, the band contributes
			\begin{equation*}
				\sum_{m\le M}\ \sum_{e, r}\ mm'\cdot
				\frac{\log(2+|r|)}{m^{2}\langle r\rangle}\cdot
				\frac{\log(2+|r'|)}{m'^{2}\langle r'\rangle}
				\;\le\; C\sum_{m \le M}\frac{\log^{4}m}{m^{2}}
				\;=\; O(1).
			\end{equation*}
			Altogether
			$\operatorname{Var}_{t}(Y_{M}) = O(1)$, which is the second
			claim of the display~\eqref{eq:kappa}: the fluctuations of the
			critical mass about its logarithmic mean are bounded in mean
			square, uniformly in $M$.
			
			For the strong law, set $M_{k} = \exp(k^{2})$. Chebyshev's
			inequality gives
			\begin{equation*}
				\operatorname{Leb}\Big\{t :
				\Big|\frac{Y_{M_{k}}(t)}{\mathbb{E}_{t}Y_{M_{k}}} - 1\Big|
				> \varepsilon\Big\}
				\;\le\;
				\frac{C}{\varepsilon^{2}\,k^{4}},
			\end{equation*}
			summable in $k$; by the Borel--Cantelli lemma
			$Y_{M_{k}}(t) = (1+o(1))\,\kappa_{g}\log M_{k}$ almost surely, and
			since $Y_{M}$ is nondecreasing in $M$ while
			$\log M_{k+1}/\log M_{k} \to 1$, sandwiching $M$ between
			consecutive scales gives the full limit~\eqref{eq:mass-strong-law}.
			
			For the regularity and the dimension, fix a time in the
			intersection of the strong-law set with the smoothing set: the
			bound~\eqref{eq:carpet-blocks} holds for every seed of bounded
			variation, because the block of the solution is the Stieltjes
			convolution
			\begin{equation*}
				P_{N}u(x,t)
				= \int \Big(\sum_{n \in \mathrm{block}}
				\frac{e^{2\pi i(n(x-\theta) - n^{2}t)}}{2\pi in}\Big)\,dg(\theta),
			\end{equation*}
			and the bracketed kernel is bounded by
			$C_{t,\varepsilon}N^{-1/2+\varepsilon}$ uniformly in the argument,
			by the same Gauss--Weyl and Abel-summation argument as in the proof
			of Lemma~\ref{lem:carpet-smoothing}; hence
			$u, w \in C^{1/2-\varepsilon}$ and
			$\Vert P_{N}w\Vert_{\infty} \le C N^{-1/2+\varepsilon}$. As in the
			preamble to the proof of Lemma~\ref{lem:block-equi}, the block
			$P_{N}$ may be taken sharp or smooth throughout this argument: the
			kernel bound and the mass~\eqref{eq:critical-mass} tolerate either
			convention, and the exclusion below runs with the smooth blocks
			that define $B^{\sigma}_{1,\infty}$. If $w$
			belonged to $B^{\sigma}_{1,\infty}$ for some $\sigma > \tfrac12$,
			then on each dyadic block
			$\Vert P_{N}w\Vert_{2}^{2}
			\le \Vert P_{N}w\Vert_{1}\Vert P_{N}w\Vert_{\infty}
			\le CN^{-\sigma - 1/2 + \varepsilon}$, and summing blocks would
			bound the critical mass, contradicting the divergence in the
			law~\eqref{eq:mass-strong-law}; so
			$\Vert P_{N}w\Vert_{1} \ge N^{-\sigma}$ along dyadic subsequences
			for every $\sigma > \tfrac12$, and Lemma~\ref{lem:osc-calculus}
			converts this, with the matching upper bound from
			$C^{1/2-\varepsilon}$, into upper box dimension exactly
			$\tfrac32$.
		\end{proof}
		
		\renewcommand{\restatename}{Corollary~\ref{cor:rough-seeds}}%
		\begin{restatement}[Universality over seeds and laws]
			Let the seed $g$ on rank space be of bounded variation with a jump,
			in the sense that its odd extension to the rank circle has a jump,
			and normalized by $\Vert g\Vert_{L^{2}(0,1)} = 1$, so that
			$|e^{-i\tau H_{0}}g|^{2}$ is a curve of kernels by unitarity.
			Then for almost every $\tau$ the deformation
			$\rho_{|e^{-i\tau H_{0}}g|^{2}}[f]$ of any law $f$ whose density is
			continuously differentiable and bounded above and below on its
			support has graph of upper box dimension exactly $\tfrac32$ over
			the whole support, the differentiability being necessary, since a
			rough multiplier $f$ can dominate the oscillation of the product;
			the dimension is global, as in Theorem~\ref{thm:strong-law}, whose
			Besov exclusion is not localized, and it localizes to every
			subinterval for the canonical seed by Theorem~\ref{thm:carpet}.
			For the canonical seed the constant of the
			law~\eqref{eq:mass-strong-law} on the rank circle is
			\begin{equation*}
				\kappa_{\mathrm{can}} = \frac{4}{\pi^{2}},
			\end{equation*}
			independent of the law. The intensity carpet of Talbot optics is
			thus fractal, of upper box dimension $\tfrac32$, at almost every
			time for \emph{arbitrary} grating profiles of bounded variation
			with a jump, at arbitrary, possibly irrational, positions. For
			rational step gratings this is the theorem of Chousionis,
			Erdo\u{g}an, and Tzirakis~\cite{chousionis2015}; the general case
			had remained open.
		\end{restatement}
		
		\renewcommand{\restatename}{Proposition~\ref{prop:dispersion}}%
		\begin{restatement}[Collision arithmetic of the dispersion]
			Let $g$ be real-valued and of bounded variation with a jump.
			\begin{enumerate}[label=\normalfont(\roman*), leftmargin=*]
				\item \textup{(Temporal halving)} For the temporal critical mass
				$Z_{K}(x) := \sum_{k=1}^{K}k^{1/2}\,
				|\hat w^{\,\mathrm{t}}(k;x)|^{2}$ of the time trace at a point,
				one-sided since the density is real and the negative temporal
				frequencies are conjugate, the representations $k = m(2n-m)$ carry
				distinct spatial characters, and, with the average over the circle
				in $x$,
				\begin{equation*}
					\mathbb{E}_{x}\,Z_{K}
					= \Big(\frac{\kappa_{g}}{2} + o(1)\Big)\log K.
				\end{equation*}
				\item \textup{(Airy doubling)} For the Airy evolution, with
				dispersion $n^{3}$ and a single jump, the within-mode collision
				classes are the pairs $\{n, m-n\}$, their two leading contributions
				are equal and reinforce, and the critical mass
				$Y_{M}^{\mathrm{Airy}}$ of the density obeys
				\begin{equation*}
					\mathbb{E}_{t}\,Y_{M}^{\mathrm{Airy}}
					= \big(2\kappa_{g} + o(1)\big)\log M.
				\end{equation*}
			\end{enumerate}
		\end{restatement}
		
		\begin{proof}
			The normalization $\Vert g\Vert_{L^{2}(0,1)} = 1$ persists under
			the flow by unitarity, so each $|e^{-i\tau H_{0}}g|^{2}$ has unit
			mass and is a kernel. The rank-space evolution of a seed is the
			torus evolution of its odd extension, so
			Theorem~\ref{thm:strong-law} applies whenever the extension
			retains a jump, delivering the global dimension over the whole
			circle; a density bounded above and below on its support makes the
			quantile chart globally bi-Lipschitz, and
			Lemma~\ref{lem:chart-transport} moves the dimension onto the law.
			For the canonical constant, the odd square wave on the rank circle
			rescales to a torus datum of unit norm with two jumps of size two,
			so the formula~\eqref{eq:kappa} gives
			$\kappa = (1\cdot8)/(2\pi^{2}) = 4/\pi^{2}$, and no property of the
			law enters.
			
			Temporal halving. If two representations $k = m(2n-m)$ share the
			same $m$ they coincide, so distinct representations of a temporal
			frequency, including the conjugate pairs $(m,n)$ and $(-m,-n)$,
			carry distinct spatial characters $e^{2\pi imx}$, and the spatial
			average of $|\hat w^{\,\mathrm{t}}(k;x)|^{2}$ is again a
			phase-blind positive sum. The bookkeeping that produces the
			halving is one-sided, and runs as follows. Fix $k > 0$. A
			representation $k = m(2n - m)$ with $m > 0$ requires $2n > m$, and
			splitting the representations by the size of $2n - m$ localizes
			the mass at $n - m = O(1)$, where $k = m^{2}(1 + O(1/m))$; the
			opposite end $n = O(1)$ produces $m(2n - m) \approx -m^{2} < 0$
			and feeds the negative temporal frequencies, which the one-sided
			normalization of $Z_{K}$ discards. The representations with
			$m < 0$ form the conjugate family $(-m, -n)$, of equal moduli
			because the density is real; here reality of $g$ enters through
			$|\hat g(-n)| = |\hat g(n)|$. Each positive $k$ therefore receives
			one end of $\sigma_{m}^{2}$ from $m = \sqrt k\,(1 + o(1)) > 0$ and
			the mirror end from the conjugate family, and for real $g$ twice a
			single end reproduces the full two-end asymptotics of
			$\sigma_{m}^{2}$ established in the proof of
			Theorem~\ref{thm:strong-law}. The constraint $0 < k \le K$ reads
			$m \le \sqrt K\,(1 + o(1))$, the weight is
			$k^{1/2} = m\,(1 + O(1/m))$, and the mean asymptotics of the
			exchange~\eqref{eq:mean-exchange} give
			\begin{equation*}
				\mathbb{E}_{x}Z_{K}
				= \sum_{m \le \sqrt K}m\,\sigma_{m}^{2}\,\big(1 + o(1)\big) + O(1)
				= \big(\kappa_{g} + o(1)\big)\log\sqrt K
				= \Big(\frac{\kappa_{g}}{2} + o(1)\Big)\log K,
			\end{equation*}
			the remaining regimes contributing $O(1)$. This is the mean
			law~\eqref{eq:temporal-halving}; the almost-sure upgrade, which
			would require the cross-mode covariance combinatorics over shared
			spatial frequencies, is stated as a conjecture in the remark
			following the proposition.
			
			Airy doubling. For the dispersion $n^{3}$ the within-mode
			frequency $Q_{m}(n) = n^{3} - (n-m)^{3} = m^{3} - 3mn(m-n)$ is
			symmetric under $n \mapsto m - n$, so the collision classes are
			the pairs $\{n, m-n\}$. For a single jump $J$ at $\theta$,
			$\hat g(n) = Je^{-2\pi in\theta}/(2\pi in) + \varrho(n)$, where
			the remainder $\varrho$, the coefficient sequence of the jump-free
			part, is small in the Ces\`aro mean square of Wiener's theorem,
			\begin{equation*}
				\frac{1}{N}\sum_{|n| \le N}n^{2}\,|\varrho(n)|^{2}
				\;\longrightarrow\; 0,
			\end{equation*}
			and the
			two leading class contributions are both equal to
			$|J|^{2}e^{-2\pi im\theta}/(4\pi^{2}n(n-m))$: they reinforce, the
			class squares double the diagonal to leading order, and the cross
			and remainder terms contribute $o(\log M)$ to the mean by the
			Cauchy--Schwarz inequality against the same Ces\`aro averages, so
			the mean computation of Theorem~\ref{thm:strong-law} goes through
			with the constant $2\kappa_{g}$, which is the
			law~\eqref{eq:airy-doubling}. The almost-sure upgrade, which would
			require counting integer points on the cubic collision conic, is
			stated as a conjecture in the remark following the proposition.
		\end{proof}

	\end{appendices}

\end{document}